\newcommand{\commentout}[1]{}
\newcommand{\alert}[1]{\textbf{\color{red}
[[[#1]]]}\marginpar{\textbf{\color{red}**}}\typeout{ALERT:
\the\inputlineno: #1}}
\def\MathF{\hbox{\rm I\kern-2pt F}}
\def\MathP{\hbox{\rm I\kern-2pt P}}
\def\MathR{\hbox{\rm I\kern-2pt R}}
\def\MathZ{\hbox{\sf Z\kern-4pt Z}}
\def\MathN{\hbox{\rm I\kern-2pt I\kern-3.1pt N}}
\def\MathC{\hbox{\rm \kern0.7pt\raise0.8pt\hbox{\footnotesize I}
\kern-4.2pt C}}
\def\MathQ{\hbox{\rm I\kern-6pt Q}}
\newcommand{\Prob}{\MathP}
\def\eps{\epsilon}
\def\tO{\tilde{O}}
\def\st{start}
\newcommand{\polylog}{{\rm polylog}}
\newcommand{\poly}{{\rm poly}}
\newcommand{\E}{{\mathbb{E}}}
\newcommand{\mommit}[1]{}
\newcommand{\namedref}[2]{\hyperref[#2]{#1~\ref*{#2}}}
\newcommand{\sectionref}[1]{\namedref{Section}{#1}}
\newcommand{\appendixref}[1]{\namedref{Appendix}{#1}}
\newcommand{\theoremref}[1]{\namedref{Theorem}{#1}}
\newcommand{\algref}[1]{\namedref{Algorithm}{#1}}
\newcommand{\claimref}[1]{\namedref{Claim}{#1}}
\newcommand{\lemmaref}[1]{\namedref{Lemma}{#1}}
\newcommand{\remarkref}[1]{\namedref{Remark}{#1}}
\newcommand{\corollaryref}[1]{\namedref{Corollary}{#1}}
\newcommand{\propertyref}[1]{\namedref{Property}{#1}}
\newtheorem{theorem}{Theorem}
\newtheorem{lemma}{Lemma}
\newtheorem{corollary}[lemma]{Corollary}
\newtheorem{remark}{Remark}
\newtheorem{claim}[lemma]{Claim}
\newtheorem{property}{Property}
\newtheorem{definition}{Definition}
\begin{document}

\title{On Efficient Distributed Construction of Near Optimal Routing Schemes\footnote{A preliminary version \cite{EN16a} of this paper was published in PODC'16.}}

\author[1]{Michael Elkin\thanks{This research was supported by the ISF grant No. (724/15).}}
\author[1]{Ofer Neiman\thanks{Supported in part by ISF grant No. (523/12) and by BSF grant No. 2015813.}}

\affil[1]{Department of Computer Science, Ben-Gurion University of the Negev,
Beer-Sheva, Israel. Email: \texttt{\{elkinm,neimano\}@cs.bgu.ac.il}}

\date{}
\maketitle

\begin{abstract}
Given a distributed network represented by a weighted undirected graph $G=(V,E)$ on $n$ vertices, and a parameter $k$, we devise a distributed algorithm that computes a routing scheme in $O(n^{1/2+1/k}+D)\cdot n^{o(1)}$ rounds, where $D$ is the hop-diameter of the network. Moreover, for odd $k$, the running time of our algorithm  is $O(n^{1/2 + 1/(2k)} + D) \cdot n^{o(1)}$. Our running time nearly matches the lower bound of $\tilde{\Omega}(n^{1/2}+D)$ rounds (which holds for any scheme with polynomial stretch). The routing tables are of size $\tilde{O}(n^{1/k})$, the labels are of size $O(k\log^2n)$, and every packet is routed on a path suffering stretch at most $4k-5+o(1)$. Our construction nearly matches the state-of-the-art for routing schemes built in a centralized sequential manner. The previous best algorithms for building routing tables in a distributed small messages model were by \cite[STOC 2013]{LP13} and \cite[PODC 2015]{LP15}. The former has similar properties but suffers from substantially larger routing tables of size $O(n^{1/2+1/k})$, while the latter has sub-optimal running time of $\tilde{O}(\min\{(nD)^{1/2}\cdot n^{1/k},n^{2/3+2/(3k)}+D\})$.
\end{abstract}


\section{Introduction}

A routing scheme in a distributed network is a mechanism that allows packets to be delivered from any node to any other node. The network is represented as a weighted undirected graph, and each node should be able to forward incoming data by using local information stored at the node, and the (short) packet's header. The local routing information is often referred to as a routing table. The routing scheme has two main phases: in the preprocessing phase, each node is assigned a routing table and a short label. In the routing phase, each node receiving a packet should make a local decision, based on its own routing table and the packet's header (which contains the label of the destination), to which neighbor forward the packet to.
The {\em stretch} of a routing scheme is the worst ratio between the length of a path on which a packet is routed, to the shortest possible path.

Designing efficient routing schemes is a central problem in the area of distributed networking, and was studied intensively \cite{PU89,ABLP90,C01,EGP03,GP03,AGM04,PU89,TZ01-spaa,C13}. The first general tradeoffs for this problem were given in pioneering works by \cite{PU89,ABLP90}.
In a seminal paper \cite{TZ01-spaa}, Thorup and Zwick presented the following compact routing scheme: Given a weighted graph $G$ on $n$ vertices and a parameter $k\ge 1$, the scheme has routing tables of size $\tilde{O}(n^{1/k})$,\footnote{The $\tilde{O}$ hides $\log^{O(1)}n$ factors.} labels of size $O(k\log n)$ and stretch $4k-5$. (Assuming that port numbers may be assigned by the routing process, otherwise the label size increases by a factor of $\log n$.)\footnote{They also presented stretch $2k-1$, assuming "handshaking": allowing the source and destination to communicate before the routing phase begins, but it is often desirable to avoid handshaking. Henceforth, we discuss only routing schemes that do not allow handshaking.}  The state-of-the-art is a scheme of \cite{C13}, which is based on \cite{TZ01-spaa}, and improves the stretch to $3.68k$.

All the results above assume that the preprocessing phase can be computed in a sequential centralized manner. However, as the problem of designing a compact routing scheme is inherently concerned with a distributed network, constructing the scheme efficiently in a distributed manner is a very natural direction. We focus on the standard CONGEST model \cite{P00}. In this model, every vertex initially knows only the edges touching it, and communication between vertices occurs in synchronous {\em rounds}. On every round, each vertex may send a small message to each of its neighbors. Every message takes a unit time to reach the neighbor, regardless of the edge weight. The time complexity is measured by the number of rounds it takes to complete a task (we assume local computation does not cost anything). Often the time depends on $n$, the number of vertices, and $D$, the {\em hop-diameter} of the graph. The hop-diameter is the maximum hop-distance between two vertices, where the hop-distance is the minimal number of edges on a path between the vertices (regardless of the weights). The hop-diameter is not to be confused with the {\em shortest path diameter} $S$, which is the maximal number of hops a shortest path uses (assuming shortest paths are unique). We always have $D\le S$, and typically $D$ is small while $S$ could be as large as $\Omega(n)$. We also assume, as common in the literature \cite{LP13,N14,KP98,GK13,HKN15}, that edge weights are integers and at most polynomial in $n$ (so that they could be sent in a single message).\footnote{We shall not consider {\em name-independent} routing, in which the label of a vertex is its ID, because \cite{LP13} showed a strong lower bound: any such scheme with stretch $\rho$ (even average stretch $\rho$) must take $\tilde{\Omega}(n/\rho^2)$ rounds to compute in this model.}

A rich research thread concerns with finding efficient distributed (approximation) algorithms for classical graph problems (e.g., minimum spanning tree, minimum cut, shortest paths), in sub-linear time \cite{GKP98,PR00,Elk06b,DHK12,HKN15}. There are several results obtaining running times of the form $\tilde{O}(\sqrt{n}+D)$, e.g. for MST, connectivity, minimum cut, approximate shortest path tree, etc. These results are often accompanied by a (nearly) matching lower bounds. The lower bound of \cite{DHK12}, based on \cite{PR00,E06},  implies that devising a routing scheme with any polynomial stretch, requires $\tilde{\Omega}(\sqrt{n}+D)$ rounds.

The first result on computing a routing scheme in a distributed manner within $o(n)$ rounds (for general graphs with $D=o(n)$), was shown by Lenzen and Patt-Shamir \cite{LP13}.\footnote{We remark that for the class of $k$-chordal graphs, \cite{NRS12} showed a construction of a routing scheme that could be computed efficiently in a distributed manner.} Their algorithm, given a graph on $n$ vertices and a parameter $k$, provides routing tables of size $\tilde{O}(n^{1/2+1/k})$, labels of size $O(\log n\cdot\log k)$, stretch at most $O(k\log k)$, and has a nearly optimal running time of $\tilde{O}(n^{1/2+1/k}+D)$ rounds. Note that the routing tables are of size $\Omega(\sqrt{n})$ for any value of $k$, which could be prohibitively large (the routing scheme of \cite{TZ01-spaa} supports stretch 3 with $\tilde{O}(\sqrt{n})$ table size). They also show implications for related problems, such as approximate diameter, generalized Steiner forest, and distance estimation. In a follow-up paper, \cite{LP15} showed how to improve the stretch of the above scheme to roughly $3k/2$ (for any $k$ divisible by 4). They also exhibited a different tradeoff, that overcame the issue of large routing tables. They devised an algorithm that produced routing tables of size $\tilde{O}(n^{1/k})$, labels of size $O(k\log^2n)$ and stretch $4k-3+o(1)$,\footnote{The paper \cite{LP15} claimed label size $O(k\log n)$, but in \cite{LP} it was communicated to us that the actual size is $O(k\log^2n)$.} but the number of rounds increases to $\tilde{O}(\min\{(nD)^{1/2}\cdot n^{1/k},n^{2/3+2/(3k)}+D\})$. Note that for moderately large hop-diameter $D\approx n^{1/3}$, the number of rounds is bounded by only $\approx n^{2/3}$ for any value of $k$. (They also show a variant where the number of rounds is $\tilde{O}(S+n^{1/k})$, but as was mentioned above, $S$ might be much larger than $D$.)

In the {\em distance estimation} problem (also known as sketching, or distance labeling), we wish to compute a small {\em sketch} for each vertex, so that given any two sketches, one can efficiently compute the (approximate) distance between the vertices. This problem was introduced in \cite{P00a}, who provided initial existential results. In \cite{SDP15}, a distributed (randomized) algorithm running in $\tilde{O}(S\cdot n^{1/k})$ rounds was shown, that computes sketches of size $O(kn^{1/k}\log n)$ with stretch at most $2k-1$. While this essentially matches the best sequential algorithm of \cite{TZ01}, the number of rounds could be $\Omega(n)$, even when $D$ is small. In \cite{LP13}, a running time of $\tilde{O}(n^{1/2+1/k}+D)$ rounds was presented, at the cost of significantly increasing the stretch to $O(k^2)$.\footnote{In fact, they showed a scheme in which it suffices to have a sketch of one vertex, and a $O(k\log n)$ size label of the other vertex, to derive the distance estimation. Our result has a similar property.} Izumi and Wattenhofer \cite{IW14} showed a lower bound of $n^{1/2 + \Omega(1/k)}$ rounds for this problem. In the Conclusion part of their paper \cite{IW14},  Izumi and Wattenhofer posed an open problem:
\\
\\
{\em ``An open problem related to our results is to find algorithms whose running time gets close to our lower bounds.''}

\paragraph{Our contribution.} We devise a randomized distributed algorithm running in\\ $(n^{1/2+1/k}+D)\cdot \min\{(\log n)^{O(k)},2^{\tilde{O}(\sqrt{\log n})}\}$ rounds, 
that with high probability, computes a compact routing scheme with routing tables of size $O(n^{1/k}\log^2n)$, labels of size $O(k\log^2n)$, and stretch at most $4k-5+o(1)$.
Moroever, for odd $k$, the running time of our algorithm is $(n^{1/2+1/(2k)}+D)\cdot \min\{(\log n)^{O(k)},2^{\tilde{O}(\sqrt{\log n})}\}$.
Note that our result nearly matches the construction of \cite{TZ01-spaa}, up to logarithmic terms in the size and $o(1)$ additive term in the stretch. This is even though the latter is computed in a sequential centralized manner. Observe that our running time nearly matches the lower bound of \cite{DHK12}, and is substantially better than that of \cite{LP15} whenever $D\ge n^{\Omega(1)}$ (which achieved similar size-stretch tradeoff). The previous result obtaining near optimal running time \cite{LP13}, suffers from excessive routing table size.

As a corollary, we show a distance estimation scheme, that can be computed in a distributed manner in $(n^{1/2+1/k}+D)\cdot \min\{(\log n)^{O(k)},2^{\tilde{O}(\sqrt{\log n})}\}$ rounds for even $k$, and for odd $k$ in
$(n^{1/2+1/(2k)}+D)\cdot \min\{(\log n)^{O(k)},2^{\tilde{O}(\sqrt{\log n})}\}$ rounds,
 providing sketches of size $O(n^{1/k}\log n)$ with stretch $2k-1+o(1)$. Each distance estimation takes only $O(k)$ time. Our result combines the improved running time of \cite{LP13} (up to lower order terms), with the near optimal size-stretch tradeoff of \cite{SDP15}. Moreover, our bound for the running time of distance estimation scheme nearly matches the lower bound $n^{1/2 + \Omega(1/k)}$ of Izumi and Wattenhofer \cite{IW14}, addressing their open problem. See Table \ref{fig:table} for a concise summary of previous and our results.

We note that to the best of our knowledge, all existing routing schemes \cite{PU89,ABLP90,TZ01,AGM04,C13,LP}, as well as the routing scheme that we present in this paper, enable distance estimation, i.e., given routing tables and labels of a pair $u,v$ of vertices, one can compute (without communication) a distance estimate $\hat{d}(u,v)$, which approximates the actual distance $d_G(u,v)$ between $u$ and $v$ up to  the stretch factor of the routing scheme. All routing schemes of this type require, by the lower bound of \cite{IW14}, at least $n^{1/2 + \Omega(1/k)}$ rounds to compute.


When preparing this submission, we learnt that concurrently and independently of us \cite{LPP16} came up with
a distributed algorithm running in $(n^{1/2+1/k}+D)\cdot 2^{\tilde{O}(\sqrt{\log n})}$ rounds, that with high probability, computes a routing scheme with routing tables of size $\tilde{O}(n^{1/k})$, labels of size $O(k\log^2n)$, and stretch at most $4k-3+o(1)$. Their result has slightly worse stretch, and a larger number of rounds whenever $k<\sqrt{\log n/\log\log n}$, or if $k$ is odd.
\begin{table}\label{fig:table}
\begin{center}
\begin{tabular}{|c|c|c|c|c|}
	\hline
                &  Number of Rounds   &   Table size & Label size & Stretch  \\
	\hline
\cite{TZ01-spaa,C13}   & $O(m)$ & $\tilde{O}(n^{\frac{1}{k}})$ & $O(k\log n)$ & $3.68k$  \\
\cite{LP15} & $\tilde{O}(S+n^{\frac{1}{k}})$ & $\tilde{O}(n^{\frac{1}{k}})$ & $O(k\log n)$ & $4k-3$\\
\cite{LP13,LP15}  & $\tilde{O}(n^{\frac{1}{2}+\frac{1}{4k}}+D)$ & $\tilde{O}(n^{\frac{1}{2}+\frac{1}{4k}})$ & $O(\log n)$ & $6k-1+o(1)$  \\
\cite{LP15} & $\tilde{O}(\min\{(nD)^{\frac{1}{2}}\cdot n^{\frac{1}{k}},n^{\frac{2}{3}+\frac{2}{3k}}+D\})$ & $\tilde{O}(n^{\frac{1}{k}})$ & $O(k\log^2n)$ & $4k-3+o(1)$\\
\hline
{\bf This paper},  & $(n^{\frac{1}{2}+\frac{1}{k}}+D)\!\cdot \!\min\{(\log n)^{O(k)},2^{\tilde{O}(\sqrt{\log n})}\}$ & $\tilde{O}(n^{\frac{1}{k}})$ & $O(k\log^2n)$ & $4k-5+o(1)$\\
even $k$ &&&& \\
{\bf This paper},  & $(n^{\frac{1}{2}+\frac{1}{2k}}+D)\!\cdot\! \min\{(\log n)^{O(k)},2^{\tilde{O}(\sqrt{\log n})}\}$ & $\tilde{O}(n^{\frac{1}{k}})$ & $O(k\log^2n)$ & $4k-5+o(1)$\\
odd $k$ &&&& \\
	\hline
\end{tabular}
\end{center}
\caption{Comparison of compact routing schemes for graphs with $n$ vertices, $m$ edges, hop-diameter $D$, and shortest path diameter $S$.}
\end{table}

\subsection{Overview of Techniques}

Let us first briefly sketch the Thorup-Zwick construction of a routing scheme. First they designed a routing scheme for trees, with  routing tables
of constant size and logarithmic label size. (Throughout the paper, the size is measured in RAM words, i.e., each word is of size $O(\log n)$.) For a general graph $G=(V,E)$ on $n$ vertices, they randomly sample a collection of sets $V=A_0\supseteq A_1\dots\supseteq A_k=\emptyset$, where for each $0<i<k$, each vertex in $A_{i-1}$ is chosen independently to be in $A_i$ with probability $n^{-1/k}$. 
The {\em cluster} of a vertex $u\in A_i\setminus A_{i+1}$ is defined as
\begin{equation}\label{eq:clustr}
C(u)=\{v\in V~:~ d_G(u,v)<d_G(v,A_{i+1})\}~.
\end{equation}
They proved that each cluster $C(x)$ can be viewed as a tree rooted at $x$, and showed an efficient procedure that given a pair $u,v\in V$, finds a vertex $x$ so that routing in the tree $C(x)$ has small stretch. So each vertex $u$ maintains in its routing table the routing information for all trees $C(x)$ containing it, while the label of $u$ consists of the tree-labels for a few special trees. They also show that (with high probability) every vertex is contained in at most $\tilde{O}(n^{1/k})$ trees.

The first difficulty we must deal with is that the routing scheme of Thorup-Zwick for a (single) tree could take a linear number of rounds to construct. We thus develop a variation on that scheme, that can be implemented efficiently in a distributed network. The basic idea is inspired by \cite{KP98} (and also used in \cite{N14}), which is to select $\approx\sqrt{n}$ vertices that partition the tree into bounded depth subtrees. We then apply the TZ-scheme locally in every subtree. The subtler part is to design a global routing scheme for the virtual tree\footnote{By a virtual tree we mean a tree whose edges are not present in the network.} induced on the sampled vertices, which must incorporate the local routing information.

\paragraph{Approximate Clusters.}

Once we have a distributed algorithm for routing in trees, we set off to apply the TZ-scheme for general graphs.
Unfortunately, it is not known how to compute the exact clusters efficiently in a distributed manner. 
In order to circumvent this barrier, we introduce the notion of {\em approximate clusters}. An approximate cluster is a subset of a cluster, that may exclude vertices that are "near" the boundary. (Slightly more formally, we may omit vertices for which the inequality \eqref{eq:clustr} becomes false if we multiply the left hand side by a $1+\epsilon$ factor, for a small $\epsilon>0$.) Our main technical contributions are: exhibiting a procedure that computes these approximate clusters, and showing that these approximate clusters are sufficient for constructing a routing scheme, with nearly matching size and stretch as in \cite{TZ01-spaa}.

The construction of clusters $C(u)$ for $u\in A_i\setminus A_{i+1}$,  where $i< k/2$, can be done in a straightforward manner (within the allotted number of rounds), since the depth of the corresponding tree is $\tilde{O}(\sqrt{n})$ with high probability, and since the {\em overlap} (the number of clusters containing a fixed vertex) is only $\tilde{O}(n^{1/k})$. The main challenge is computing the approximate clusters in the large scales, for $i\ge k/2$. To this end, we employ several tools. The first is {\em approximate multi-source hop-bounded distance computation}, which appeared recently in \cite{N14} (a certain variant of it  appeared also in \cite{LP13b}).
This enables us to compute approximations for $B$-hops shortest paths (paths that use at most $B$ edges), from a given $m$ sources to every vertex, in $\tilde{O}(B+m+D)$ rounds. The second tool we use is {\em hopsets}. The notion of hopsets was introduced by \cite{C00} in the context of parallel approximate shortest path algorithms, and it has found applications in dynamic, streaming and distributed settings as well \cite{B09,HKN14,HKN15}. A $(\beta,\epsilon)$-hopset is a (small) set of edges $F$, so that every shortest path has a corresponding $\beta$-hops path, whose weight is at most $1+\epsilon$ larger.

We compute the approximate clusters in the large scales as follows. First we sample $\approx\sqrt{n}$ vertices (those in $A_{k/2}$), and compute approximate $\sqrt{n}$-hops shortest paths from all the sampled vertices. Next we apply a $(\beta,\epsilon)$-hopset on the graph induced by these sampled vertices, where $\beta\le 2^{\tilde{O}(\sqrt{\log n})}$ and $\epsilon\approx 1/k^4$.
(A pair of sampled vertices is connected in this graph if and only if one is reachable from the other via an approximate $\sqrt{n}$-hop-bounded shortest path.)
An efficient distributed algorithm to construct such hopsets is given by \cite{HKN15,EN16}. We shall use the construction of \cite{EN16}, since it facilitates much smaller $\beta$, whenever $k$ is small.
(There are also some additional properties of hopsets from \cite{EN16}, that make them more convenient in the context of routing. See Section \ref{sec:prel}.)
This enables us to compute the approximate clusters on the sampled vertices, since we need only $\beta$ steps of exploration from each source $u$, using again that the overlap is small. Finally,  we extend each approximate cluster to the other vertices, by initiating an exploration from each sampled vertex to hop-distance $\approx\sqrt{n}$ in the original graph (in fact, one can use the multi-source hop-bounded distance computation of \cite{N14}). The correctness follows since with high probability, every vertex that should be included in some approximate cluster $\tilde{C}(u)$, has either $u$ or a sampled vertex within $\approx\sqrt{n}$ hops on the shortest path to it. The thresholds for entering an approximate cluster must be set carefully, so that every vertex on that shortest path will also join $\tilde{C}(u)$, in order to guarantee that the trees will indeed be connected (which is clearly crucial for routing), and on the other hand, to make sure that no vertex participates in too many trees. Unlike the exact TZ clusters, approximate clusters generally do not have to be connected.

The fact that our clusters are only approximate induces increased stretch. The analysis is similar to that of \cite{TZ01}, which consists of $k$ iterations of searching for the "right" tree. We must pay a factor of $1+O(\epsilon)$ in every one of these iterations, but fortunately, the hopset construction allows us to take sufficiently small $\epsilon$, so that all the additional stretch accumulates to an additive $o(1)$.

From a high level, our approach is similar to those of \cite{LP13,LP15}. In \cite{LP15}, they also use a variant of the TZ-routing scheme, which allows small errors in the distance estimations. The main difference is in handling the large scales. In \cite{LP13}, the idea was to build a spanner on a sample of $\approx\sqrt{n}$ vertices, which  reduces the number of edges. So a routing scheme can be efficiently computed on the spanner, and then extended to the entire graph. This approach inherently suffers from large storage requirement, since every vertex needs to know all the spanner edges. In \cite{LP15} the idea was to "delay" the start of large scales from $k/2$ to roughly $l_0=(k/2)\cdot(1+ \log D/\log n)$. Then they apply a distance estimation on the sampled vertices at scale $l_0$ (those in $A_{l_0}$) to construct the routing tables for all higher scales, and extend these to the remainder of the graph. However, the exploration in the graph on $A_{l_0}$ may need to be of $\approx n^{1-l_0/k}$ hops, which induces a factor of $D\cdot n^{1-l_0/k}=(nD)^{1/2}$ to the number of rounds.
The use of hopsets allows us to avoid the large memory requirement, since the routing is oblivious to the hopset, while significantly shortening the exploration range. Since the exploration range is proportional to the running time, the latter also decreases.


\subsection{Organization}

After stating in \sectionref{sec:prel} some of the tools we shall apply, in \sectionref{sec:route} we describe the notion of approximate clusters, and show how to compute these efficiently in a distributed manner. Then in \sectionref{sec:ana}, we demonstrate how these approximate clusters could be used for a routing scheme in general graphs. In \sectionref{sec:sketch} we show the distance estimation scheme. Finally, in \sectionref{sec:tree} we show our distributed tree routing.

\section{Preliminaries}\label{sec:prel}

Let $G=(V,E,w)$ be a weighted graph on $n$ vertices. We assume that $w:E\to\{1,\dots,\poly(n)\}$ (without this assumption, there will be a logarithmic dependence on the aspect ratio in the data structures' size and running times). Let $D$ be the {\em hop-diameter} of $G$, that is, the diameter of $G$ if all weights were 1. Denote by $d_G$ the shortest path metric on $G$. 
Let $d_G^{(t)}$ be the {\em $t$-hops} shortest path distance (abusing notation, since this is not a metric). That is, $d_G^{(t)}(u,v)$ is the shortest length of a path from $u$ to $v$, that has at most $t$ edges (set $d_G^{(t)}(u,v)=\infty$ if every path from $u$ to $v$ has more than $t$ edges). For each $u,v\in V$, define $h_G(u,v)$ as the number of hops on the shortest path in $G$ between $u$ and $v$. We shall always use this notation with respect to the input graph $G$, and thus will omit the subscript.
A (dominating) {\em virtual graph} on $G$ is a graph $G'=(V',E',w')$ with $V'\subseteq V$, and for every $u,v\in V'$ we have that $d_{G'}(u,v)\ge d_G(u,v)$. Every vertex in $V'$ should know all the edges of $E'$ touching it.
The following lemma formalizes the broadcast ability of a distributed network (see, e.g., \cite{P00}).
\begin{lemma}\label{lem:pipe}
Suppose every $v\in V$ holds $m_v$ messages, each of $O(1)$ words, for a total of $M=\sum_{v\in V}m_v$. Then all vertices can receive all the messages within $O(M+D)$ rounds.
\end{lemma}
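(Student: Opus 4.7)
The plan is to reduce the claim to pipelined convergecast and broadcast on a BFS tree of $G$. First I would construct a BFS tree $T$ rooted at an arbitrary vertex $r$; this takes $O(D)$ rounds by flooding and the resulting tree has depth at most $D$. All subsequent communication uses only tree edges. The algorithm has two phases: an \emph{upcast} in which every vertex forwards its own messages together with those arriving from its descendants toward $r$, followed by a \emph{downcast} in which $r$ broadcasts all $M$ collected messages down $T$ to every vertex.

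For the downcast the analysis is immediate. The root streams the $M$ messages in some fixed order, sending one per round on each tree edge. A simple induction on depth shows that the $j$-th message reaches every vertex at depth $d$ by round $d+j-1$, so after $M+D-1$ rounds everyone has received all $M$ messages.

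The main technical step is the upcast. Here I would have each internal vertex maintain a FIFO queue ordered by a single global priority (say, lexicographic on the pair (originating vertex ID, local message index)) and forward one message per round to its parent. The key claim is that all $M$ messages reach $r$ within $M+D-1$ rounds. To argue it, I would use the standard potential $\Phi(\mu) = \text{depth of the holder of }\mu + |\{\text{messages of higher priority still at or below the holder of }\mu\}|$ for a tagged message $\mu$: in every round either $\mu$ advances one edge toward $r$, or its holder transmits a higher-priority message that by the FIFO discipline will not again overtake $\mu$; in both cases $\Phi(\mu)$ drops by at least one. Since $\Phi(\mu) \le M + D - 1$ initially, $\mu$ arrives at $r$ within $M+D-1$ rounds, and taking $\mu$ to be the last arriving message finishes the upcast bound.

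The main obstacle is the congestion at internal tree nodes during the upcast, since several subtrees may simultaneously wish to push messages through a common parent edge. The global FIFO priority rule is exactly what resolves this: it produces a consistent total order on all pending messages that no two vertices ever disagree on, so a message that is delayed at some node is provably blocked only by messages that themselves must cross the same edges, and these blockings can be charged without double-counting. Concatenating the $O(M+D)$-round upcast and the $O(M+D)$-round downcast yields the stated bound.
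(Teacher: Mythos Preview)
The paper does not actually prove this lemma; it is stated as a standard fact with a pointer to Peleg's textbook. Your overall plan---build a BFS tree, pipeline an upcast of all $M$ messages to the root, then pipeline a downcast---is exactly the standard argument, and your downcast analysis is correct.

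The upcast analysis, however, has a real gap. You claim that
\[
\Phi(\mu)\;=\;\text{depth of the holder of }\mu\;+\;\bigl|\{\text{higher-priority messages at or below the holder of }\mu\}\bigr|
\]
decreases by at least one every round. This fails precisely in the case you gloss over, namely when $\mu$ moves up: the depth term drops by one, but the subtree rooted at the new holder is strictly larger and may contain higher-priority messages that arrived from sibling subtrees. Concretely, let $r$ have a single child $a$, and let $a$ have children $b,c,d$; place a lowest-priority $\mu$ at $b$ and one higher-priority message at each of $c$ and $d$. After one round all three messages sit at $a$, and $\Phi(\mu)$ has jumped from $2+0=2$ to $1+2=3$. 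Your side remark that a blocking message ``will not again overtake $\mu$'' is also insufficient: such a message indeed never moves back down, but $\mu$ can climb up to it and be blocked by it again. (On the path $r\!-\!a\!-\!b$ with two highest-priority messages at $a$ and $\mu_1,\mu_2$ at $b$, the message $\mu_1$ blocks $\mu_2$ at $b$ in round~$1$ and again at $a$ in round~$3$.)

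A clean replacement is the standard inductive bound: rank the messages $1,\ldots,M$ by your global priority and prove, by induction on $t$, that after round $t$ the rank-$j$ message $\mu_j$ is at depth at most $\max(0,\,D+j-1-t)$. If $\mu_j$ moved in round $t{+}1$ the bound is immediate from the hypothesis for $\mu_j$; if $\mu_j$ was blocked at some $v$, then $v$ transmitted a message $\mu_{j'}$ with $j'<j$, which by the hypothesis sat at depth at most $D+j'-1-t\le D+j-1-(t{+}1)$, and $\mu_j$ is at that same depth. Hence each $\mu_j$ reaches the root by round $D+j-1$, and the upcast finishes in $M+D-1$ rounds, after which your downcast argument applies unchanged.
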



\subsection{Tools}

We will make use of the following theorem due to \cite[Theorem 3.6]{N14}, which shows how to compute hop-bounded distances from a given set of sources, efficiently in a distributed manner.
\begin{theorem}[\cite{N14}]\label{thm:Nanongkai}
Given a weighted graph $G=(V,E,w)$ of hop-diameter $D$, a set $V'\subseteq V$, and parameters $B\ge 1$ and $0<\epsilon<1$, there is a (randomized) distributed algorithm that w.h.p runs in $\tilde{O}(|V'|+B+D)/\epsilon$ rounds, so that every $u\in V$ will know values $\{d_{uv}\}_{v\in V'}$ satisfying\footnote{The computed values are symmetric, that is, $d_{uv}=d_{vu}$ whenever $u,v\in V'$.}
\begin{equation}\label{eq:duv}
d_G^{(B)}(u,v)\le d_{uv}\le (1+\epsilon)d_G^{(B)}(u,v)~,
\end{equation}
\end{theorem}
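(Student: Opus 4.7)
The plan is to reduce the weighted $(1+\epsilon)$-approximate multi-source $B$-hop-bounded distance problem to $O(\log n/\epsilon)$ ``scale-restricted'' multi-source BFS-like computations, and within each such computation to pipeline the $|V'|$ sources with random-delay scheduling. First, I would build a BFS tree $T$ of $G$ in $O(D)$ rounds and, using \lemmaref{lem:pipe}, broadcast the identifiers of $V'$ to every vertex in $O(|V'|+D)$ rounds, so that every $u\in V$ can index the sources and maintain a slot per $v\in V'$.

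For the scale reduction, partition the range of plausible path lengths into $L=O(\log n/\epsilon)$ multiplicative buckets $[\rho_i,(1+\epsilon)\rho_i]$ with $\rho_i=(1+\epsilon)^i$. For each scale $\rho=\rho_i$, round every edge weight up to the nearest multiple of $\delta=\epsilon\rho/B$; in the rounded graph, any true $B$-hop path of weight in $[\rho,(1+\epsilon)\rho]$ is preserved up to an additive $B\delta=\epsilon\rho$, i.e., within a $(1+O(\epsilon))$ multiplicative factor, and all paths of interest have length at most $O(B/\epsilon)$ when measured in units of $\delta$. At this scale, run a multi-source BFS-like process from $V'$ in which each vertex $u$ forwards a label to a neighbor at most once per source (the first time the source's wave reaches $u$) — monotone updates only — giving per-edge congestion at most $2|V'|$ across the scale's execution.

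Each scale's process has dilation $O(B)$, per-edge congestion $O(|V'|)$, and requires $O(D)$ rounds of global broadcast, so a Leighton--Maggs--Rao random-delay scheduling lemma finishes it in $\tilde O(|V'|+B+D)$ rounds w.h.p.; aggregating over the $L$ scales and absorbing constants into $\epsilon$ yields the advertised $\tilde O(|V'|+B+D)/\epsilon$ round complexity. Correctness follows from a standard scaling argument: the scale containing $d_G^{(B)}(u,v)$ yields an estimate within a $(1+O(\epsilon))$ factor, and taking the minimum estimate over scales gives \eqnref{eq:duv}. Symmetry $d_{uv}=d_{vu}$ for $u,v\in V'$ is enforced by a final $\tilde O(|V'|+D)$-round pipelined broadcast along $T$ that agrees on the smaller of the two computed values.

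The main obstacle is the combined congestion/dilation scheduling. Naively, $|V'|$ concurrent Bellman-Ford-style computations cost $\Omega(|V'|\cdot B)$ because each edge must carry a per-source update per round. The savings come from (i) the monotone ``propagate only on first arrival of a source at $u$'' rule, which bounds per-edge messages by $O(|V'|)$ across the scale, and (ii) the random-delay LMR scheduling that collapses congestion plus dilation into an additive $\tilde O(|V'|+B+D)$. Formalizing these bounds simultaneously, and ensuring the weight rounding interacts correctly with the monotone propagation, is the technical heart of the argument.
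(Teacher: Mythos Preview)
This theorem is not proved in the paper; it is quoted from \cite[Theorem~3.6]{N14} and used as a black-box tool, so there is no in-paper argument to compare against.

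That said, your outline is essentially the approach of \cite{N14}: weight rounding to reduce $B$-hop approximate distances to a small-integer BFS, combined with random-delay scheduling of the $|V'|$ sources so that congestion and dilation add rather than multiply. One arithmetic slip is worth flagging. At scale $\rho$ with granularity $\delta=\epsilon\rho/B$ you correctly observe that the relevant rounded path lengths are $O(B/\epsilon)$ in units of $\delta$, but then you assert ``dilation $O(B)$'' for the per-scale BFS. The dilation of that BFS is the path length in $\delta$-units, i.e., $O(B/\epsilon)$, not $O(B)$. Paired with your $L=O((\log n)/\epsilon)$ scales this would yield a $1/\epsilon^{2}$ dependence, which is too much. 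The fix (and what \cite{N14} actually does) is to use only $O(\log n)$ geometric scales---powers of $2$ rather than of $1+\epsilon$---and accept dilation $O(B/\epsilon)$ per scale; the product is then $\tilde{O}(B/\epsilon)$, matching the stated bound. With that correction, your plan is the same as the cited construction.
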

\begin{remark}\label{rem:parents}
While not explicitly stated in \cite{N14}, the proof also provides that each $u\in V$ knows, for every $v\in V'$, a vertex $p=p_v(u)$ which is a neighbor of $u$ satisfying
\begin{equation}\label{eq:p-u}
d_{uv}\ge w(u,p)+d_{pv}~.
\end{equation}
\end{remark}
\paragraph{Hopsets.}
The following notion of hopsets was introduced by \cite{C00}.
\begin{definition}[Hopsets]
A set of (weighted) edges $F$ is a $(\beta,\epsilon)$-hopset for a graph $G=(V,E)$, if in the graph $H=(V,E\cup F)$, for every $u,v\in V$,
\begin{equation}\label{eq:hopset1}
d_G(u,v)\le d_H(u,v)\le d_H^{(\beta)}(u,v)\le(1+\epsilon)d_G(u,v)~.
\end{equation}
\end{definition}
We will need the following {\em path-reporting} property from our hopset. This property will be crucial for the connectivity of the trees corresponding to the approximate clusters.
\begin{property}\label{prop:hop}
A hopset $F$ for a graph $G$ is called {\em path-reporting}, if for every hopset edge $(u,v)\in F$ of weight $b$, there exists a corresponding path $P$ in $G$ between $u$ and $v$ of length $b$. Furthermore, every vertex $x$ on $P$ knows $d_P(x,u)$ and $d_P(x,v)$, and its neighbors on $P$.
\end{property}

The following result is from 
\cite{EN16}, which provides a path-reporting hopset. We remark that the original hopset construction of \cite{C00} could be made path-reporting. Also, in \cite[Theorem 4.10]{HKN15}, a distributed algorithm constructing a hopset is provided, which possibly could be made path-reporting, however, it inherently cannot provide a better hopbound than $2^{\tilde{O}(\sqrt{\log n})}$.
\begin{theorem}[\cite{EN16}]\label{thm:hopset}
Let $G$ be a weighted graph on $n$ vertices with hop-diameter $D$, let $0<\epsilon<1$, and let $G'$ be a virtual graph on $G$ with $m$ vertices. Let $0<\rho<1/2$ be a parameter, and write $\beta=\left(\frac{\log m}{\epsilon\cdot\rho}\right)^{O(1/\rho)}$. Then there is a randomized distributed algorithm that w.h.p computes in $\tilde{O}(m^{1+\rho}+D)\cdot \beta^2$ rounds, a path-reporting $\left(\beta,\epsilon\right)$-hopset $F$ for $G'$. 
\end{theorem}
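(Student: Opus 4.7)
The plan is to follow a Thorup--Zwick style sampling hierarchy on the vertices of $G'$. We pick a nested chain $V(G')=A_0 \supseteq A_1 \supseteq \cdots \supseteq A_{\lceil 1/\rho\rceil}$ where each $A_i$ is obtained from $A_{i-1}$ by independent sampling with probability $m^{-\rho}$, so w.h.p.\ $|A_i|=\tilde{O}(m^{1-i\rho})$ and $A_{\lceil 1/\rho\rceil}=\emptyset$. Processing levels bottom-up, at level $i$ and for each pivot $u\in A_i\setminus A_{i+1}$ we form a ``bunch'' $B(u)\subseteq V(G')$ consisting of those vertices whose approximate distance to $u$ in the currently maintained virtual graph $H_i = G' \cup F_{<i}$ is strictly smaller than their approximate distance to the next-level sample $A_{i+1}$. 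For each $v\in B(u)$ we insert a hopset edge $(u,v)$ into $F$ whose weight equals the computed approximate distance. A standard bunch analysis, analogous to the one in \cite{TZ01-spaa}, yields $|B(u)|=\tilde{O}(m^\rho)$ w.h.p., so $|F|=\tilde{O}(m^{1+\rho})$ and each vertex of $G'$ participates in $\tilde{O}(m^\rho)$ hopset edges.

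The core subroutine at each level is a $\beta$-hop approximate shortest-path computation from the sources $A_i$ over $H_i$. I would implement a single Bellman--Ford relaxation step by piping, via \lemmaref{lem:pipe}, each vertex's tentative distances along the (at most $\tilde{O}(m^\rho)$) edges of $H_i$ incident to it; since $G'$-edges are known locally by assumption and each hopset edge was explicitly inserted at a prior level by an endpoint, one relaxation step costs $\tilde{O}(m^{1+\rho}+D)$ rounds. Iterating $\beta$ times delivers the desired $\beta$-hop approximate distances with parent pointers as in \remarkref{rem:parents}; the very first level is bootstrapped by \theoremref{thm:Nanongkai} applied directly on $G$ from the sources $A_i$ with hopbound $\beta$. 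To satisfy path-reporting (\propertyref{prop:hop}) for each new edge $(u,v)$, we recursively expand the corresponding $\beta$-hop $H_i$-path into a genuine $G$-path by substituting every previously inserted hopset edge by its stored $G$-path, and propagate the distances $d_P(x,u),d_P(x,v)$ to every intermediate $x$ along a back-trace over the parent pointers.

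The main obstacle is balancing the hopbound $\beta$ against the round count. The bunch analysis gives an inductive improvement: if $\beta_i$ hops of $H_i$ suffice to $(1+\epsilon')$-approximate all $G'$-shortest paths at distance scale $\kappa^i$ with $\kappa = \Theta(\log m/(\epsilon\rho))$, then after inserting level-$(i{+}1)$ edges we get $\beta_{i+1} = O(\kappa\cdot \beta_i)$ for scale $\kappa^{i+1}$, yielding $\beta = \kappa^{O(1/\rho)} = (\log m/(\epsilon\rho))^{O(1/\rho)}$ after $\lceil 1/\rho\rceil$ levels; choosing $\epsilon' = \Theta(\epsilon\rho)$ keeps the accumulated multiplicative error below $1+\epsilon$. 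The per-level cost is $\tilde{O}(m^{1+\rho}+D)\cdot\beta$ for the Bellman--Ford exploration, plus another $\beta$ factor for expanding the at most $\tilde{O}(m^{1+\rho})$ newly inserted hopset edges into their $G$-paths (each such expansion walks at most $\beta$ previously inserted hopset edges, pipelined again via \lemmaref{lem:pipe}); this gives $\tilde{O}(m^{1+\rho}+D)\cdot\beta^2$ per level, and multiplying by the $1/\rho \le \log m$ levels and absorbing into $\tilde{O}(\cdot)$ matches the claimed bound.
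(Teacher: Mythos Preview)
The paper does not prove \theoremref{thm:hopset}; it is quoted as a black-box result from \cite{EN16} and used as a tool (see the sentence preceding the theorem: ``The following result is from \cite{EN16}, which provides a path-reporting hopset''). There is therefore no proof in this paper to compare your attempt against.

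That said, your high-level architecture---a Thorup--Zwick hierarchy with $\lceil 1/\rho\rceil$ levels, bunches of size $\tilde{O}(m^\rho)$, and a level-by-level hopbound recursion $\beta_{i+1}=O(\kappa\cdot\beta_i)$---is indeed the skeleton of the construction in \cite{EN16}, so you are on the right track. A few points in your sketch would need tightening before it could stand as a proof. First, $G'$ is a \emph{virtual} graph on $G$: its edges need not be edges of $G$, so ``one relaxation step'' cannot be done by local message passing along $G'$-edges; every such step must be simulated by global broadcast/convergecast over $G$ (this is how the paper itself uses $G''$ in \sectionref{sec:pre}), and the accounting for that must be explicit. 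Second, your path-reporting argument (``recursively expand the corresponding $\beta$-hop $H_i$-path into a genuine $G$-path'') is dangerous as stated: a hopset edge at level $i$ unfolds into up to $\beta$ level-$(i{-}1)$ edges, each of which unfolds into up to $\beta$ level-$(i{-}2)$ edges, so naively the underlying $G$-path has length $\beta^{O(1/\rho)}$, not $\beta$; the actual construction in \cite{EN16} avoids this by tying each hopset edge directly to a path in $G'$ (not in $H_i$) at the moment of insertion. Third, invoking \theoremref{thm:Nanongkai} ``directly on $G$ from the sources $A_i$'' does not help, since distances in $G$ need not approximate distances in the virtual graph $G'$ (only the reverse inequality $d_{G'}\ge d_G$ is guaranteed).
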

We remark that in many applications (see, e.g., applications in \cite{C00,EN16}) the size of the hopset is important. However, here we only care about the size to the extent that it affects the number of rounds required to compute the hopset.

\paragraph{Approximate Shortest Path Tree (SPT).}
Recently, \cite{HKN15} obtained an efficient distributed algorithm for computing an approximate SPT, which we shall use. Let us first define the problem formally.
Let $G=(V,E,w)$ be a weighted graph. Given a set of vertices $A\subseteq V$, computing an $(1+\epsilon)$-approximate SPT rooted at $A$, means that every vertex $u\in V$ will know a value $\hat{d}(u)$ satisfying
\begin{equation}\label{eq:stretc}
d_G(u,A)\le\hat{d}(u)\le(1+\epsilon)d_G(u,A)~,
\end{equation}
and that $u$ will know a vertex $\hat{z}(u)\in A$ so that $d_G(u,\hat{z}(u))\le\hat{d}(u)$. The following theorem is a slight variation on a theorem shown in \cite{HKN15}. Here we use the hopsets of \cite{EN16} for an improved running time.

\begin{theorem}\label{thm:SPT}
Let $G=(V,E,w)$ be a weighted graph on $n$ vertices with hop-diameter $D$. Given a set $A\subseteq V$ of size $|A|\le 2\sqrt{n}\ln n$, and $\frac{1}{\polylog~n}<\epsilon<1$, there is a distributed algorithm that computes an $(1+\epsilon)$-approximate SPT rooted at $A$ in $(n^{1/2+1/(2k)}+D)\cdot \min\{(\log n)^{O(k)},2^{\tilde{O}(\sqrt{\log n})}\}$ rounds.
\end{theorem}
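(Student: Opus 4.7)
The plan is to follow the template of the approximate-SPT algorithm of \cite{HKN15} but to swap in the path-reporting hopset of \theoremref{thm:hopset} (which gives a smaller hop-bound $\beta$ for moderate $k$ than the hopset available at the time of \cite{HKN15}), and to use the hop-bounded multi-source computation of \theoremref{thm:Nanongkai} to lift the result back to the whole graph. The key idea is to contract the computation to a virtual graph on $\tilde O(\sqrt n)$ representative vertices, on which a hopset with small $\beta$ makes a Bellman--Ford-style computation cheap.

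Concretely, I would proceed in five steps. First, sample a random set $A'\subseteq V$ of size $\Theta(\sqrt n\log n)$ and set $S=A\cup A'$, so $|S|=O(\sqrt n\log n)$; a Chernoff bound guarantees that w.h.p.\ every (fixed) $B$-hop window on a shortest path meets $A'$, for $B:=\Theta(\sqrt n\log n)$. Second, invoke \theoremref{thm:Nanongkai} with sources $S$, hop bound $B$, and accuracy $\epsilon'=\epsilon/c$ for a constant $c$, in $\tilde O(\sqrt n+D)/\epsilon$ rounds; every $u\in V$ now knows $d_{uv}\in[d_G^{(B)}(u,v),(1+\epsilon')d_G^{(B)}(u,v)]$ for each $v\in S$. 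Third, form the dominating virtual graph $G'$ on $S$ whose edge weight on $(u,v)$ is $d_{uv}$, and apply \theoremref{thm:hopset} to produce a path-reporting $(\beta,\epsilon')$-hopset $F$ on $G'$; choosing $\rho=1/k$ or $\rho\approx 1/\sqrt{\log n}$ and keeping the smaller yields $\beta=\min\{(\log n)^{O(k)},2^{\tilde O(\sqrt{\log n})}\}$ in $\tilde O(|S|^{1+\rho}+D)\cdot\beta^2=(n^{1/2+1/(2k)}+D)\cdot\beta^2$ rounds.

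Fourth, run $\beta$ pipelined iterations of Bellman--Ford on $G'\cup F$ with source set $A$, so that each $v\in S$ ends up with $\delta(v):=d_{G'\cup F}^{(\beta)}(v,A)$ together with a witness $\hat z(v)\in A$. Each iteration is realised by every $v\in S$ broadcasting its current estimate via \lemmaref{lem:pipe}, costing $\tilde O(|S|+D)$ rounds per iteration, for a total of $\tilde O(\beta(\sqrt n+D))$ rounds. Fifth, broadcast the $|S|$ pairs $(\delta(v),\hat z(v))$ to all of $V$ in a further $\tilde O(\sqrt n+D)$ rounds by \lemmaref{lem:pipe}, and let every $u\in V$ set $\hat d(u)=\min_{v\in S}(d_{uv}+\delta(v))$ with $\hat z(u)$ equal to $\hat z(v^*)$ for a minimizing $v^*$.

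For correctness, fix $u\in V$ and a shortest path $P$ from $u$ to its nearest source $z\in A$. If $|P|\le B$, the term $v=z$ already gives $\hat d(u)\le d_{uz}+\delta(z)\le(1+\epsilon')d_G(u,z)$; otherwise $P$ meets some $w\in A'$ within $B$ hops of $u$, and by decomposing the $w$-to-$z$ subpath into $B$-hop pieces whose endpoints lie in $S$, the concatenation of the corresponding $G'$-edges lifts to a path in $G'\cup F$ of hop-length at most $\beta$ and weight at most $(1+\epsilon')d_G(w,z)$, giving $d_{uw}+\delta(w)\le(1+\epsilon')^2 d_G(u,A)$. Choosing $\epsilon'=\epsilon/3$ yields $\hat d(u)\le(1+\epsilon)d_G(u,A)$, while the lower bound $\hat d(u)\ge d_G(u,A)$ and the witness condition $d_G(u,\hat z(u))\le\hat d(u)$ both follow from dominance of $G'\cup F$ over $G$. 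The main obstacle is the parameter balancing: one must set $\rho$ and $\epsilon'$ so that the cost $\tilde O(|S|^{1+\rho}+D)\cdot\beta^2$ of \theoremref{thm:hopset} matches the target $(n^{1/2+1/(2k)}+D)\cdot\min\{(\log n)^{O(k)},2^{\tilde O(\sqrt{\log n})}\}$, which is precisely what drives the choice $\rho=1/k$, with the alternative $\rho\approx 1/\sqrt{\log n}$ taking over once $k\gtrsim\sqrt{\log n}$.
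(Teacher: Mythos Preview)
Your proposal is correct and follows essentially the same approach as the paper: sample $\tilde O(\sqrt n)$ extra vertices, build the virtual graph via \theoremref{thm:Nanongkai}, augment it with the hopset of \theoremref{thm:hopset} (with $\rho=\max\{1/k,\log\log n/\sqrt{\log n}\}$), run $\beta$ Bellman--Ford rounds from $A$ on the virtual graph by global broadcast, and then extend to all of $V$ using the precomputed $d_{uv}$ values. The only slip is expository: the $\beta$-hop path in $G'\cup F$ has weight at most $(1+\epsilon')^2 d_G(w,z)$, not $(1+\epsilon')$, since both the $G'$-edge weights and the hopset incur a $(1+\epsilon')$ factor---but your final bound $(1+\epsilon')^2 d_G(u,A)$ already reflects this, so the argument goes through with $\epsilon'=\epsilon/3$.
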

We defer the proof to \appendixref{app:SPT}.

\section{Distributed Routing Scheme}\label{sec:route}

In this section we define the notions of approximate pivots and approximate clusters, and describe an efficient distributed algorithm that computes these. Let us first recall the basic definitions from \cite{TZ01}.

Let $G=(V,E,w)$ be a weighted graph, fix $k\ge 1$. Sample a collection of sets $V=A_0\supseteq A_1\dots\supseteq A_k=\emptyset$, where for each $0<i<k$, each vertex in $A_{i-1}$ is chosen independently to be in $A_i$ with probability $n^{-1/k}$. A point $z\in A_i$ is called an $i$-pivot of $v$, if $d_G(v,z)=d_G(v,A_i)$. The cluster of a vertex $u\in A_i\setminus A_{i+1}$ is defined as
\begin{equation}\label{eq:cluster}
C(u)=\{v\in V~:~ d_G(u,v)<d_G(v,A_{i+1})\}~.
\end{equation}
We quote a claim from \cite{TZ01}, which provides a bound on the overlap of clusters.
\begin{claim}\label{claim:number-of-clusters}
With high probability, each vertex is contained in at most $4n^{1/k}\log n$ clusters.
\end{claim}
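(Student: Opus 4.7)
The plan is to fix $v\in V$ and, for each level $0\le i<k$, reduce the number of clusters $C(u)$ with $u\in A_i\setminus A_{i+1}$ that contain $v$ to a single geometric random variable, then union bound over $v$ and over levels. I first condition on the random set $A_i$ and list its vertices by distance from $v$ as $u^{(i)}_1,u^{(i)}_2,\ldots$. The key structural observation is: for $u=u^{(i)}_j\in A_i\setminus A_{i+1}$, the membership $v\in C(u)$ is equivalent to $d_G(v,u)<d_G(v,A_{i+1})$, which in turn holds iff none of $u^{(i)}_1,\ldots,u^{(i)}_j$ lie in $A_{i+1}$. Defining $J_i$ as the smallest index $j$ with $u^{(i)}_j\in A_{i+1}$ (and $J_i=\infty$ if none), the number of level-$i$ clusters that contain $v$ is therefore exactly $J_i-1$.

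Next, because conditional on $A_i$ each element of $A_i$ independently joins $A_{i+1}$ with probability $p=n^{-1/k}$, the variable $J_i$ is geometric with parameter $p$. Therefore
\[
\Pr[J_i-1\ge T]\le (1-p)^T\le e^{-pT},
\]
which drops below $n^{-3}$ once $T$ is a suitable constant times $n^{1/k}\log n$. A union bound over the $n$ choices of $v$ and the (at most $k<\log n$) levels then yields, with high probability, the stated bound of $4n^{1/k}\log n$ clusters containing any single vertex (after absorbing the level-count factor into the logarithmic term and tuning constants).

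The only delicate point is justifying the independence invoked in the geometric tail bound. The distance-based sorting of $A_i$ relative to $v$ is a deterministic function of $A_i$ and the (fixed) input graph $G$, and is therefore independent of the fresh Bernoulli($p$) sampling that selects $A_{i+1}$ inside $A_i$. Conditioning on $A_i$ first makes this independence transparent, and the remainder is a routine Chernoff-and-union-bound computation; I do not expect any serious obstacle beyond the book-keeping.
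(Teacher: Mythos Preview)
The paper does not prove this claim; it simply quotes it from \cite{TZ01}. Your argument is exactly the classical Thorup--Zwick ``bunch'' analysis (sort $A_i$ by distance from $v$, observe that the level-$i$ contribution is governed by a geometric random variable with parameter $n^{-1/k}$, then apply a tail bound and union bound), so in spirit you are reproducing precisely the cited proof.

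Two small points are worth tightening. First, your geometric argument uses that each element of $A_i$ joins $A_{i+1}$ independently with probability $n^{-1/k}$, but for $i=k-1$ the set $A_k=\emptyset$ by definition rather than by sampling; there $J_{k-1}=\infty$ always, and the level-$(k-1)$ contribution is simply $|A_{k-1}|$, which you should bound separately by a one-line Chernoff argument. Second, your final accounting is slightly off: bounding each level by $c\,n^{1/k}\log n$ and summing over $k$ levels gives $ck\,n^{1/k}\log n$, and one cannot ``absorb the level-count factor into the logarithmic term and tune constants'' down to $4n^{1/k}\log n$ unless $k$ is bounded. To match the stated constant you would need to bound the sum of the $k$ geometrics directly via a negative-binomial tail (equivalently, a Chernoff bound on the number of successes in $\Theta(n^{1/k}\log n)$ Bernoulli$(n^{-1/k})$ trials); alternatively, the bound $O(kn^{1/k}\log n)$ that your per-level approach actually yields is sufficient for every use in the paper.
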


The following claim shows that (with high probability) the sets $A_i$ have favorable properties.
\begin{claim}\label{claim:hit-path}
With high probability the following holds for every $0\le i\le k-1$: (1) $|A_i|\le 4n^{1-i/k}\ln n$, and (2) For every $u,v\in V$ such that $h(u,v)>4n^{i/k}\ln n$, there exists a vertex of $A_i$ on the shortest path between $u$ and $v$.
\end{claim}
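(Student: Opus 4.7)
The plan is to prove each of the two items by a Chernoff / union bound argument on the random sampling, and then take a union bound over the $k$ scales $i=0,\dots,k-1$. Recall that by the nested sampling, $A_i$ is obtained by including each vertex of $V$ independently with probability $p_i:=n^{-i/k}$ (the case $i=0$ is trivial since $A_0=V$, so assume $i\ge 1$).

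For item (1), note that $\Expect[|A_i|]=n\cdot p_i = n^{1-i/k}\ge n^{1/k}\ge 1$ for $i\le k-1$. A standard multiplicative Chernoff bound then yields
\[
\Prob\!\left[|A_i|>4n^{1-i/k}\ln n\right] \;\le\; \exp\!\left(-\Omega(n^{1-i/k}\ln n)\right) \;\le\; n^{-\Omega(1)}.
\]
Union-bounding over the $k\le n$ scales gives item (1) with high probability.

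For item (2), fix a scale $i$ and a pair $u,v\in V$ with $h(u,v)>T:=4n^{i/k}\ln n$. The (fixed) shortest path between $u$ and $v$ contains more than $T$ internal vertices, each independently sampled into $A_i$ with probability $p_i=n^{-i/k}$. Hence the probability that none of them lies in $A_i$ is at most
\[
(1-p_i)^{T} \;\le\; \exp(-p_i\cdot T) \;=\; \exp(-4\ln n) \;=\; n^{-4}.
\]
There are at most $\binom{n}{2}<n^2$ pairs $(u,v)$, so union-bounding over pairs and then over the $k$ scales, the probability that item (2) fails for some $i$ and some pair is at most $k\cdot n^{2}\cdot n^{-4}=k\cdot n^{-2}=n^{-\Omega(1)}$.

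Combining both union bounds, the two properties hold simultaneously for every $0\le i\le k-1$ with probability at least $1-n^{-\Omega(1)}$, as claimed. The main (mild) subtlety is ensuring that the Chernoff bound in item (1) is nontrivial even at the largest scale $i=k-1$, which is handled by the observation $n^{1-i/k}\ge n^{1/k}\ge 1$; everything else is a routine union bound.
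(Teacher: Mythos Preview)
Your proposal is correct and follows essentially the same approach as the paper: a Chernoff bound for item~(1), the miss-probability estimate $(1-n^{-i/k})^{4n^{i/k}\ln n}\le n^{-4}$ for item~(2), and a union bound over the $\binom{n}{2}$ pairs and the $k$ scales. The only cosmetic difference is that the paper does not restrict to internal vertices of the path (and your count ``more than $T$ internal vertices'' is off by an additive constant), but this has no effect on the bound.
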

\begin{proof}
Fix $i$. The first assertion holds by a simple Chernoff bound, since every vertex is chosen to be in $A_i$ independently with probability $n^{-i/k}$, and the expected size of $A_i$ is $n^{1-i/k}$.
For the second assertion, let $u,v$ be such that $h(u,v)>4n^{i/k}\ln n$ (recall that $h(u,v)$ is the number of hops on the shortest path from $u$ to $v$ in $G$). The probability that none of the vertices on the $u$ to $v$ shortest path is included in $A_i$ is at most
\[
\left(1-n^{-i/k}\right)^{4n^{i/k}\ln n}\le n^{-4}~.
\]
Taking a union bound on the $k$ possible values of $i$ and ${n\choose 2}$ pairs completes the proof.
\end{proof}
From now on assume that all the events in the claims above hold, which yields the following corollary.
\begin{corollary}\label{cor:hit-path}
For any $0\le i< k-1$, $u\in A_i\setminus A_{i+1}$ and $v\in C(u)$, it holds that $h(u,v)\le 4n^{(i+1)/k}\ln n$.
\end{corollary}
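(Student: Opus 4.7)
The plan is to prove the contrapositive by contradiction: assume $h(u,v) > 4n^{(i+1)/k}\ln n$ and derive a violation of the cluster membership condition $d_G(u,v) < d_G(v, A_{i+1})$.

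First, I would apply \claimref{claim:hit-path} with index $i+1$ (which is valid since the corollary assumes $i < k-1$, hence $i+1 \le k-1$). The hypothesis $h(u,v) > 4 n^{(i+1)/k}\ln n$ is exactly the precondition of part~(2) of the claim, so there must exist some vertex $z \in A_{i+1}$ lying on the shortest path between $u$ and $v$.

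Next, since $z$ lies on the $u$-to-$v$ shortest path, we have the decomposition $d_G(u,v) = d_G(u,z) + d_G(z,v)$, and in particular $d_G(z,v) \le d_G(u,v)$. But $z \in A_{i+1}$ means that $d_G(v, A_{i+1}) \le d_G(v, z) \le d_G(u,v)$, which contradicts the defining inequality $d_G(u,v) < d_G(v, A_{i+1})$ from $v \in C(u)$. Hence $h(u,v) \le 4 n^{(i+1)/k}\ln n$, as desired.

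The argument is essentially routine once \claimref{claim:hit-path} is available; there is no real obstacle. The only subtlety worth double-checking is the index range: one must verify that invoking the claim at level $i+1$ is legitimate, which it is because the corollary restricts to $i < k-1$ so that $i+1 \in \{1, \dots, k-1\}$ and the hitting guarantee for $A_{i+1}$ applies. Note also that $d_G(u,z) \ge 0$ (with equality possible only if $u=z$, in which case $u \in A_{i+1}$, contradicting $u \in A_i \setminus A_{i+1}$, so in fact $d_G(z,v) < d_G(u,v)$, though we only need the weak inequality).
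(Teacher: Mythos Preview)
Your proof is correct and follows essentially the same approach as the paper: assume the hop-count exceeds the threshold, invoke \claimref{claim:hit-path} at level $i+1$ to place a vertex of $A_{i+1}$ on the $u$--$v$ shortest path, and derive a contradiction with the cluster-defining inequality. Your version is slightly more explicit about the index range and the strictness of the inequality, but the argument is the same.
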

\begin{proof}
If it were the case that $h(u,v)>4n^{(i+1)/k}\ln n$, then \claimref{claim:hit-path} would imply that there exists a vertex of $A_{i+1}$ on the shortest path from $v$ to $u$. In particular, $d_G(v,u)>d_G(v,A_{i+1})$, which contradicts \eqref{eq:cluster}.
\end{proof}

\subsection{Approximate Clusters and Pivots}

Since we do not know how to compute efficiently in a distributed manner the pivots and clusters, we settle for an approximate version, which is formally defined in this section. Fix the parameter $\epsilon=\frac{1}{48k^4}$. For each $v\in V$ and $0\le i\le k-1$, a point $\hat{z}\in A_i$ is called an {\em approximate $i$-pivot} of $v$ if
\begin{equation}\label{eq:pivot}
d_G(v,\hat{z})\le (1+\epsilon)d_G(v,A_i)~.
\end{equation}
Now we define for each $0\le i\le k-1$ and each vertex $u\in A_i\setminus A_{i+1}$, a set of vertices which we call an {\em approximate cluster}. The approximate cluster is a subset of the cluster $C(u)$, and it is allowed to exclude vertices of $C(u)$ which are "close" to the boundary. First define the vertices that are far from the boundary (with respect to $\epsilon$), as
\begin{equation}\label{eq:eps-clust}
C_\epsilon(u)=\{v\in V~:~ d_G(u,v)<\frac{d_G(v,A_{i+1})}{1+\epsilon}\}.
\end{equation}
The approximate cluster $\tilde{C}(u)$ will be a set that satisfies the following:
\begin{equation}\label{eq:app-cluster}
C_{6\epsilon}(u)\subseteq\tilde{C}(u)\subseteq C(u)~.
\end{equation}

Each approximate cluster $\tilde{C}(u)$ we compute, will be stored as a tree rooted at $u$, that is, each vertex $v\in \tilde{C}(u)$ will store a pointer to its parent in the tree. This tree (abusing notation, we call this tree $\tilde{C}(u)$ as well) has the property that distances to the root $u$ are approximately preserved, that is, for any $v\in\tilde{C}(u)$ we have that
\begin{equation}\label{eq:tree-preserve}
d_G(u,v)\le d_{\tilde{C}(u)}(u,v)\le(1+\epsilon)^4d_G(u,v)~.
\end{equation}

\begin{remark}\label{rem:siize}
Since $\tilde{C}(u)\subseteq C(u)$, \claimref{claim:number-of-clusters} implies that with high probability, each vertex is contained in at most $4n^{1/k}\log n$ approximate clusters.
\end{remark}

In the remainder of this section we devise an efficient distributed algorithm for computing the approximate pivots and the trees built from approximate clusters, and show the following.

\begin{theorem}\label{thm:app-pivot-cluster}
Let $G=(V,E)$ be a weighted graph with $n$ vertices and hop-diameter $D$, and let $k\ge 1$ be an integer. Set $\epsilon=1/(48k^4)$. Then there is a randomized distributed algorithm that w.h.p computes all approximate pivots and approximate clusters (with respect to $\epsilon$) within $(n^{1/2+1/k}+D)\cdot \min\{(\log n)^{O(k)},2^{\tilde{O}(\sqrt{\log n})}\}$ rounds.\footnote{For odd $k$ the number of rounds becomes $(n^{1/2+1/(2k)}+D)\cdot \min\{(\log n)^{O(k)},2^{\tilde{O}(\sqrt{\log n})}\}$.}
\end{theorem}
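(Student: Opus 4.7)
The plan is to split the computation along the sampling threshold $\ell := \lceil k/2\rceil$ and handle levels $i<\ell$ and $i\ge\ell$ by two different strategies.  In either regime the local membership test for $\tilde C(u)$ is the same: $v$ joins iff $(1+\epsilon)\,\hat d(u,v) < \hat d(v,A_{i+1})$, with $v$'s parent in the tree set to the \remarkref{rem:parents}-parent on the approximate $u$-$v$ shortest path.  A short bookkeeping of the $(1+\epsilon)$ errors then yields \eqref{eq:app-cluster}, and applying the same inequality inductively to every prefix of the approximate $u$-$v$ path simultaneously gives connectivity of $\tilde C(u)$ and the stretch bound \eqref{eq:tree-preserve}.

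\textbf{Small scales ($i<\ell$).}  By \corollaryref{cor:hit-path} the cluster hop-depth here is at most $B_i := 4 n^{(i+1)/k}\ln n$, which over this range is $\tO(\sqrt n)$ for even $k$ and $\tO(n^{1/2+1/(2k)})$ for odd $k$.  I would invoke \theoremref{thm:Nanongkai} with $V'=A_{i+1}$ and hop bound $B_{i+1}$ to obtain $(1+\epsilon)$-approximate pivot distances $\hat d(v,A_{i+1})$ (justified by \claimref{claim:hit-path}(2)), and then run an approximate multi-source hop-bounded Bellman-Ford of depth $B_i$ from the sources $A_i\setminus A_{i+1}$, in which each vertex retains only its $\tO(n^{1/k})$ nearest sources using the overlap bound of \claimref{claim:number-of-clusters}.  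The parent pointers of \remarkref{rem:parents} then give the cluster trees, and each phase stays within the target round budget.

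\textbf{Large scales ($i\ge\ell$).}  Since $B_i$ may be $\Omega(n)$, the plan is to route the computation through the sampled set $A_\ell$ of size $\tO(n^{1-\ell/k})$.  First, \theoremref{thm:Nanongkai} with $V'=A_\ell$ and $B=4 n^{\ell/k}\ln n$ gives every $v\in V$ a $(1+\epsilon)$-approximate distance $d_{va}$ to every $a\in A_\ell$; by \claimref{claim:hit-path}(2) the induced virtual graph $G'$ on $A_\ell$ $(1+\epsilon)$-approximates the $G$-metric among $A_\ell$-pairs.  Next, \theoremref{thm:hopset} applied to $G'$ with $\rho\in\{1/k,\,1/\sqrt{\log n}\}$ yields a path-reporting $(\beta,\epsilon)$-hopset $F$ with $\beta=\min\{(\log n)^{O(k)},\,2^{\tO(\sqrt{\log n})}\}$ in $\tO(|A_\ell|^{1+\rho}+D)\cdot\beta^2$ rounds.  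On $G'\cup F$, for each $i\ge\ell$ a $\beta$-hop multi-source computation from $A_i\subseteq A_\ell$ produces approximate $i$-pivots and the restrictions $\tilde C(u)\cap A_\ell$ in $\tO(\beta+D)$ rounds per level.  Finally, to extend $\tilde C(u)$ to $v\in V\setminus A_\ell$, \claimref{claim:hit-path}(2) guarantees a gateway $a^*\in A_\ell$ within $B_\ell$ hops of $v$ on its shortest path to $u$; $v$ already knows $d_{va^*}$, and each $a\in A_\ell$ broadcasts its $\tO(n^{1/k})$ cluster-distances via \lemmaref{lem:pipe} in $\tO(|A_\ell|\cdot n^{1/k}+D)$ rounds.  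The latter quantity is $\tO(n^{1/2+1/k}+D)$ for even $k$ and $\tO(n^{1/2+1/(2k)}+D)$ for odd $k$, which is precisely the term that distinguishes the two running times in the theorem statement; $v$'s tree-parent is then set to the \remarkref{rem:parents}-parent toward $a^*$, and inside $A_\ell$ the tree is read off $G'\cup F$ with each hopset edge expanded to an explicit $G$-path via \propertyref{prop:hop}.

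\textbf{Main obstacle.}  The delicate part is propagating three compounded $(1+\epsilon)$ errors — one from \theoremref{thm:Nanongkai}, one from the hopset, and one from chaining along the $u$-$v$ path — coherently enough that the approximate clusters still satisfy $C_{6\epsilon}(u)\subseteq\tilde C(u)\subseteq C(u)$ with the $\tO(n^{1/k})$ overlap bound of \claimref{claim:number-of-clusters} still in force, and that each $\tilde C(u)$ remains a connected tree realizing \eqref{eq:tree-preserve}; any slippage either breaks connectivity, blows up the overlap (and hence the table size), or corrupts the stretch analysis of \sectionref{sec:ana}.  The choice $\epsilon = 1/(48k^4)$ is tuned so that these errors, compounded through the $O(k)$ levels of the final routing scheme, add only $o(1)$ to the stretch $4k-5$.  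Summing the costs of the $O(k)$ phases, with the $\beta^2$ factor from \theoremref{thm:hopset} contributing the $\min\{(\log n)^{O(k)},\,2^{\tO(\sqrt{\log n})}\}$ term, yields the bound claimed in the theorem.
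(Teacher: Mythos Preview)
Your high-level architecture---split at $\ell=\lceil k/2\rceil$, direct exploration below, hopset-on-virtual-graph above, path-reporting for connectivity---is exactly the paper's. But two concrete steps in your small-scale plan do not meet the round budget, and your ``main obstacle'' is not merely delicate but is in fact broken as you stated it.

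\textbf{Small-scale pivots.} You propose to invoke \theoremref{thm:Nanongkai} with $V'=A_{i+1}$. That theorem costs $\tilde O(|V'|+B+D)/\epsilon$ rounds, and for small $i$ we have $|A_{i+1}|\approx n^{1-(i+1)/k}$, which for, say, $i=0$ and large $k$ is nearly $n$. The paper instead runs an ordinary Bellman--Ford \emph{from the set} $A_i$ (a single super-source), obtaining the \emph{exact} value $d_G(v,A_i)$ in $O(n^{i/k}\ln n)$ rounds with no $|A_i|$ term. Relatedly, for odd $k$ your uniform small-scale treatment misses the ``middle level'' $i=(k-1)/2$: an overlap-bounded Bellman--Ford of depth $B_i\approx n^{1/2+1/(2k)}$ with congestion $\tilde O(n^{1/k})$ costs $\tilde O(n^{1/2+3/(2k)})$, overshooting the claimed odd-$k$ bound. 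The paper handles just this one level via \theoremref{thm:Nanongkai} (both $|A_i|$ and $B_i$ are $\tilde O(n^{1/2+1/(2k)})$ there, and only there).

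\textbf{Connectivity under a uniform test.} Your single rule ``$v$ joins iff $(1+\epsilon)\hat d(u,v)<\hat d(v,A_{i+1})$'' does not propagate to the parent. If $p=p_v$ is the \remarkref{rem:parents}-parent, then from $b_p(u)\le b_v(u)-w(v,p)$ and the triangle inequality one gets only $b_p(u)<d_G(p,A_{i+1})\le\hat d(p,A_{i+1})$, not $(1+\epsilon)b_p(u)<\hat d(p,A_{i+1})$; the $(1+\epsilon)$ slack you spent on $v$ is gone. The paper resolves this by using \emph{graduated} thresholds: Phase~1 admits $v$ only if $b_v(u)<\hat d_{i+1}(v)/(1+\epsilon)^3$, Phase~1.5 vertices satisfy the weaker $(1+\epsilon)^2$ bound, and Phase~2 relaxes further to $(1+\epsilon)$. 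Each relaxation is exactly what is needed to certify that the designated parent (which may only qualify at a later phase) still passes its test, while the tightest threshold keeps $\tilde C'(u)\subseteq C(u)$ so the overlap bound of \claimref{claim:number-of-clusters} survives. Without this staging, either connectivity or the $\tilde O(n^{1/k})$ overlap (hence the round bound on each Bellman--Ford iteration) fails.
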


\paragraph{Computing Pivots.}
We first compute the pivots for $0\le i\le \lceil k/2\rceil$. For these values of $i$ we can compute the exact pivots. We conduct $4n^{i/k}\cdot \ln n$ iterations of Bellman-Ford rooted in the vertex set $A_i$. As a result, every $v\in V$ learns the exact value $\hat{d}_i(v)=d_G(v,A_i)$ and a pivot $\hat{z}_i(v)\in A_i$. Indeed, for any $v\in V$, if $u\in A_i$ is a vertex such that $d_G(v,u)=d_G(v,A_i)$, then \claimref{claim:hit-path} implies that $h(v,u)\le 4n^{i/k}\cdot \ln n$, so the exploration will detect this shortest path.
As every message consists of $O(1)$ words (every vertex sends to its neighbors the name of the vertex in $A_i$ and the current distance to it), the total number of rounds is $\sum_{i=0}^{\lceil k/2\rceil}O(n^{i/k}\cdot \ln n)\le\tilde{O}(n^{1/2+1/(2k)})$.

For $ \lceil k/2\rceil< i\le k-1$ we can only compute {\em approximate} pivots $\hat{z}_i(v)$ for each $v\in V$. For each such $i$, apply \theoremref{thm:SPT} with root set $A_i$ and the parameter $\epsilon$ (indeed by \claimref{claim:hit-path}, $|A_i|\le 4n^{1-(\lceil k/2\rceil+1)/k}\ln n\le 2\sqrt{n}\ln n$, and $\epsilon=\Omega(1/k^4)\ge\Omega(1/\log^4n)$). This will take $(n^{1/2+1/(2k)}+D)\cdot \min\{(\log n)^{O(k)},2^{\tilde{O}(\sqrt{\log n})}\}$ rounds. At the end, every vertex $v\in V$ will know its approximate pivot $\hat{z}_i(v)$, and the (approximate) distance $\hat{d}_i(v)$, as returned by the algorithm. By \eqref{eq:stretc}, $\hat{z}_i(v)$ satisfies the requirement from an approximate pivot (see \eqref{eq:pivot}).


\subsection{Building the Small Trees}
For $0\le i< \lceil k/2\rceil$, we can compute the trees $C(u)$ corresponding to the actual clusters. We need to find such a tree for every $u\in A_i\setminus A_{i+1}$, and it is done in the following manner. For each such $u$ in parallel, we initiate a bounded-depth Bellman-Ford exploration for $4n^{(i+1)/k}\ln n$ iterations. By bounded-depth we mean the following: each $v\in V$ that receives a message originated at $u$, and computes that its (current) distance to $u$ is $b_v(u)$, will join $C(u)$ and broadcast the message to its neighbors in $G$ iff
\begin{equation}\label{eq:w+b}
b_v(u) < d_G(v,A_{i+1})~.
\end{equation}
(Recall that for $i\le \lceil k/2\rceil$, each vertex stores the distance to the exact $i$-th pivot $\hat{d}_i(v)=d_G(v,A_i)$.) The vertex $v$ will also store the name of its parent in $C(u)$, the neighbor $p\in V$ that sent $v$ the message which last updated $b_v(u)$.

We now argue that if $v\in C(u)$, then $v$ will surely receive a message from $u$ and will have $b_v(u)=d_G(u,v)$.
Let $P$ be the shortest path in $G$ between $u$ and $v$. Note that every vertex $y$ on $P$ has $y\in C(u)$, because
\[
d_G(y,u)=d_G(v,u)-d_G(v,y)\stackrel{\eqref{eq:cluster}}{<}d_G(v,A_{i+1})-d_G(v,y)\le d_G(y,A_{i+1})~.
\]
It follows by a simple induction that every such $y$ will receive a message with the exact distance $b_y(u)=d_G(y,u)$ and thus will send it onwards, after at most $h(u,y)$ steps of the algorithm. In particular, distances to the root $u$ in $C(u)$ are preserved exactly. \corollaryref{cor:hit-path} asserts that for all $v\in C(u)$ we have that $h(u,v)\le 4n^{(i+1)/k}\ln n$. So there are enough Bellman-Ford iterations to reach all vertices of $C(u)$.

\paragraph{The middle level.}
When $k$ is odd, the level $i=(k-1)/2$  induces a relatively large running time $\tilde{O}(n^{1/2+3/(2k)})$ (see the upcoming paragraph on running-time analysis), if one uses the algorithm that was described above. To overcome this, we use a different method for this level. We apply \theoremref{thm:Nanongkai} on the set of sources $S=A_i\setminus A_{i+1}$, with $B=4n^{(i+1)/k}\cdot\ln n$ and $\epsilon$, each vertex $v\in V$ will get a distance estimate $b_v(u)$ for each $u\in S$. Indeed, if $v\in C(u)$ then by \corollaryref{cor:hit-path}, $h(u,v)\le B$, so that the distance estimate returned by the theorem is a $1+\epsilon$ approximation to $d_G(u,v)=d_G^{(B)}(u,v)$.

We say that $v$ joins the (approximate) cluster $\tilde{C}(u)$ of $u\in S$ if the following holds
\[
b_v(u)<d_G(v,A_{i+1}),
\]
(recall that $v$ knows the exact distance to its $i+1=(k+1)/2$-pivot). The parent $p$ of $v$ in the tree induced by $\tilde{C}(u)$ will be the parent given by \remarkref{rem:parents}. We show that this $p$ will join $\tilde{C}(u)$ as well. This holds because
\[
b_p(u)\stackrel{\eqref{eq:p-u}}{\le}b_v(u)-w(v,p)
<d_G(v,A_{i+1})-d_G(v,p)\le d_G(p,A_{i+1})~.
\]
Finally, we note that this is an approximate cluster; since $d_G(u,v)\le b_v(u)$ it follows that $\tilde{C}(u)\subseteq C(u)$, while if $v\in C_{\epsilon}(u)$ then
\[
b_v(u)\stackrel{\eqref{eq:duv}}{\le}(1+\epsilon)d_G(u,v)\stackrel{\eqref{eq:eps-clust}}{<}d_G(v,A_{i+1})~,
\]
so $\tilde{C}(u)\supseteq C_\epsilon(u)$, satisfying \eqref{eq:app-cluster}.
(We remark that the middle level is the only one in which one may use \theoremref{thm:Nanongkai}. In all other levels, either the number of sources $|A_i|\approx n^{1-i/k}$ or the required depth $B\approx n^{(i+1)/k}$ will be larger than $n^{1/2+1/k}$.)

\paragraph{Running time.} By \claimref{claim:number-of-clusters}, every vertex can belong to at most $\tilde{O}(n^{1/k})$ clusters. Hence, the congestion at every Bellman-Ford iteration is at most $\tilde{O}(n^{1/k})$. Thus the number of rounds required to implement each of the $4n^{(i+1)/k}\ln n$ iterations of Bellman-Ford is $\tilde{O}(n^{1/k})$. When $k$ is even, the total running time is $\sum_{i=0}^{ k/2-1}\tilde{O}(n^{(i+2)/k})=\tilde{O}(n^{1/2+1/k})$. When $k$ is odd, the middle level $(k-1)/2$ will take time $\tilde{O}(|S|+B+D)=\tilde{O}(n^{1/2+1/(2k)}+D)$, while the lower levels will take $\sum_{i=0}^{(k-3)/2}\tilde{O}(n^{(i+2)/k})=\tilde{O}(n^{1/2+1/(2k)})$. So for odd $k$, the total running time is $\tilde{O}(n^{1/2+1/(2k)}+D)$~.

\subsection{Building the Large Trees}

Building the trees $\tilde{C}(u)$ for $u\in A_i\setminus A_{i+1}$ when $i\ge \lceil k/2\rceil$ is more involved, since the number of iterations for the simple Bellman-Ford style approach grows like $\approx n^{(i+2)/k}$. We will use the fact that there are only few vertices in $A_i$, and divide the computation into two phases. In the first phase we compute virtual trees only on $\approx\sqrt{n}$ vertices, and in the second phase we extend the trees to the entire  graph. Before we turn to the two-phase construction, we describe the preprocessing stage, in which we build structures that are later used in both phases.

\subsubsection{Preprocessing}\label{sec:pre}

Let $V'=A_{\lceil k/2\rceil}$, and
set $B=4n/\E[|V'|]\cdot\ln n$. That is, for even $k$ we set $B=4n^{1/2}\cdot\ln n$, while for odd $k$, $B=4n^{1/2+1/(2k)}\cdot\ln n$.
Apply \theoremref{thm:Nanongkai} to $G$ with the set $V'$ and parameters $B$ and $\epsilon/2$. By \claimref{claim:hit-path} we may assume $|V'|\le 4n^{1/2}\ln n$, and since $1/\epsilon\le 48\log^4n$, the number of rounds required is w.h.p $\tilde{O}(n^{1/2+1/(2k)}+D)$. From now on assume that \eqref{eq:duv} indeed holds (with $\epsilon$ replaced by $\epsilon/2$). This happens w.h.p. Let $G'=(V',E',w')$ be a (virtual) graph on $G$, and for each $u,v\in V'$ with $d_{uv}<\infty$, set the weight of the edge connecting them to be $w'(u,v)=d_{uv}$ (where $d_{uv}$ is the value computed in \theoremref{thm:Nanongkai}).
Following \cite{N14}, it can be shown that for any $u,v\in V'$,
\begin{equation}\label{eq:g'}
d_G(u,v)\le d_{G'}(u,v)\le(1+\epsilon/2)d_G(u,v)~.
\end{equation}

Apply \theoremref{thm:hopset} on $G'$ with parameters $\epsilon/3$ and $\rho=\max\{1/k,\log\log n/\sqrt{\log n}\}$. We obtain a $(\beta,\epsilon/3)$-hopset $F$ with $\beta=\min\{2^{\tilde{O}(\sqrt{\log n})},(\log n)^{O(k)}\}$. The number of rounds required is $\tilde{O}(|V'|^{1+\rho}+D)\cdot \beta^2=(n^{1/2(1 +1/k)}+D)\cdot\min\{2^{\tilde{O}(\sqrt{\log n})},(\log n)^{O(k)}\}$.


Let $G''=(V',E'\cup F,w'')$ be the graph obtained from $G'$ by adding all the hopset edges. (Note that some edges may have their weight replaced. In the case of conflict, the weights $w''$ agree with the weights of the hopset $F$.) By \eqref{eq:hopset1} and \eqref{eq:g'} we have that $G''$ is indeed a virtual graph since
$d_{G''}(u,v)\ge d_{G'}(u,v)\ge d_G(u,v)$. On the other hand,
\begin{eqnarray*}
d_{G''}^{(\beta)}(u,v)&\le& (1+\epsilon/3)d_{G'}(u,v)\le (1+\epsilon/2)(1+\epsilon/3)d_G(u,v)\\
&\le& (1+\epsilon)d_G(u,v)~.
\end{eqnarray*}

We conclude that the graph $G''$ satisfies the following property: for every $u,v\in V'$,
\begin{equation}\label{eq:g''}
d_G(u,v)\le d_{G''}^{(\beta)}(u,v)\le(1+\epsilon)d_G(u,v)~.
\end{equation}

\subsubsection{Construction}

Fix $\lceil k/2\rceil\le i\le k-1$. We build the trees $\tilde{C}(u)$ for all $u\in A_i\setminus A_{i+1}$ in parallel, in two main phases.

\paragraph{Phase 1.}
For each such  $u$, conduct $\beta$ iterations of depth-bounded Bellman-Ford in the graph $G''$.\footnote{See \eqref{eq:rrr} below for the required condition on depth.} (Since this is a virtual graph, all the messages will be collected at the root of some BFS tree of $G$ via pipelined convergecast, and then broadcasted to the entire graph $G$ via pipelined broadcast. See \lemmaref{lem:pipe}.) If $v\in V'$ receives a message originated at $u$ with (current) distance to $u$ which is $b_v(u)$, it will join the approximate cluster of $u$ and forward the message to its neighbors in $G''$ iff
\begin{equation}\label{eq:rrr}
b_v(u)<\frac{\hat{d}_{i+1}(v)}{(1+\epsilon)^3}~.
\end{equation}
(Recall that $\hat{d}_{i+1}(v)$ is the approximate distance from $v$ to the its (approximate) level $i+1$ pivot.)
The vertex $v$ will also store its {\em virtual} parent, the neighbor $p\in V'$ that sent $v$ the message which last updated $b_v(u)$. For each $u\in A_i\setminus A_{i+1}$, we have a (virtual) tree $\tilde{C}'(u)$ on the vertices of $V'$ that received a message originated at $u$ and satisfy \eqref{eq:rrr}.

\paragraph{Phase 1.5.}
The purpose of this step is to guarantee that every vertex which was added to the (virtual) tree being built for some $u\in A_i\setminus A_{i+1}$, will have an appropriate parent in $G$ (through which it will route later on). The issue is that hopset edges are not equipped with parents in $G$, unlike the edges of $G'$, for which \remarkref{rem:parents} provides parents. We deal with this by using the path-reporting property of hopset edges -- each such edge is realized by a path in $G'$, so we ensure the vertices of this path join the tree as well, and set parents accordingly. We now describe this formally.

When the first phase ends after $\beta$ iterations, for every hopset edge $(x,y)\in F$ such that $x$ is the virtual parent of $y$ we do the following. Let $P$ be the path in $G'$ realizing this edge. Each $v\in V'(P) \setminus\{x\}$ that has $b_v(u)$ value (for some $u \in A_i \setminus A_{i+1}$) at least $b_x(u)+d_P(x,v)$, updates its distance estimate to be $b_v(u)=b_x(u)+d_P(x,v)$,  joins $\tilde{C}'(u)$ (if it hasn't already), and sets its virtual parent as $v'$, where $v'$ is the neighbor of $v$ on $P$ closer to $x$ (recall \propertyref{prop:hop}, which guarantees that $v$ knows the relevant information).

Finally, set the {\em real} parents: for each vertex $v\in \tilde{C}'(u)$ with a virtual parent $v'$, set $p(v)=p_{v'}(v)$ (see \remarkref{rem:parents} for the definition and computation of $p_{v'}(v)$). Recall that $(v,v')$ is a virtual edge (of the graph $G'$), while $(v,p(v))$ is a ``real'' edge from $G$.

\paragraph{Phase 2.} Here we extend each virtual tree $\tilde{C}'(u)$ to the vertices of $V$. For all $u\in A_i\setminus A_{i+1}$, every vertex $v\in \tilde{C}'(u)$ broadcasts to the entire graph its value $b_v(u)$ (and the name of $u$). A vertex $y\in V$ will add itself to $\tilde{C}(u)$ if
\begin{equation}\label{eq:y-t-u}
d_{yv}+b_v(u)<\frac{\hat{d}_{i+1}(y)}{1+\epsilon}~,
\end{equation}
where $d_{yv}$ is the value computed in \theoremref{thm:Nanongkai}.
Also, $y$ will set $p(y)=p_v(y)$ as its (real) parent in $\tilde{C}(u)$ for the $v$ minimizing $b_y(u)=d_{yv}+b_v(u)$ (breaking ties arbitrarily). We remark that the condition of \eqref{eq:y-t-u} is less stringent than that of \eqref{eq:rrr}. Thus vertices of $V'$ who did not join $\tilde{C}'(u)$, may now be included in $\tilde{C}(u)$. 

First we argue that for any $u\in A_i\setminus A_{i+1}$, the vertices $v\in V'$ added to $\tilde{C}'(u)$ in phase 1.5 with distance estimate $b_v(u)$ satisfy the following:
\begin{equation}\label{eq:rrro}
b_v(u)<\frac{\hat{d}_{i+1}(v)}{(1+\epsilon)^2}~.
\end{equation}
To see this, let $(x,y)\in F$ be the hop-set edge which triggered the addition of $v$ to $\tilde{C}'(u)$ at phase 1.5, and let $P$ be the path in $G'$ realizing this edge, then
\[
b_v(u)= d_P(x,v)+b_x(u) = d_P(x,y)-d_P(v,y)+b_x(u)=b_y(u)-d_P(v,y)~.
\]
It follows that
\[
b_v(u)= b_y(u)-d_P(v,y)\stackrel{\eqref{eq:rrr}}{<}\frac{\hat{d}_{i+1}(y)}{(1+\epsilon)^3}-d_G(v,y)\stackrel{\eqref{eq:stretc}}{\le}\frac{d_G(y,A_{i+1})-d_G(v,y)}{(1+\epsilon)^2}\le \frac{d_G(v,A_{i+1})}{(1+\epsilon)^2}\stackrel{\eqref{eq:stretc}}{\le}\frac{\hat{d}_{i+1}(v)}{(1+\epsilon)^2}~,
\]
which proves \eqref{eq:rrro}. The next lemma asserts that the values $b_v(u)$ approximate well the distances to the root $u$ of the virtual tree.
\begin{lemma}\label{lem:bv}
For any $u\in A_i\setminus A_{i+1}$ and $v\in\tilde{C}(u)$ with the corresponding value $b_v(u)$, we have that
\begin{equation}\label{eq:ttre}
d_G(u,v)\le b_v(u)\le(1+\epsilon)^4d_G(u,v)~.
\end{equation}
\end{lemma}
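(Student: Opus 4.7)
\emph{Proof plan.} I would prove the two inequalities separately. The lower bound $b_v(u) \ge d_G(u,v)$ I would establish by induction on the construction phase that updates $b_v(u)$: in Phase~1, $b_v(u)$ equals the length of a walk in the dominating virtual graph $G''$ (see \eqref{eq:g''}), so $b_v(u) \ge d_{G''}(u,v) \ge d_G(u,v)$; in Phase~1.5, the update $b_v(u) = b_x(u) + d_P(x,v)$ is valid because \propertyref{prop:hop} guarantees that $P$ is a genuine $G$-path, hence the inductive bound on $b_x(u)$ together with the triangle inequality gives the claim; in Phase~2, $b_v(u) = d_{zv} + b_z(u)$ where $d_{zv} \ge d_G(z,v)$ by \theoremref{thm:Nanongkai} and $b_z(u) \ge d_G(u,z)$ inductively.

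For the upper bound $b_v(u) \le (1+\epsilon)^4 d_G(u,v)$, I would split on how $v$ was inserted into $\tilde{C}(u)$. If $v \in V'$ was added in Phase~1 or~1.5, I would invoke \eqref{eq:g''} to obtain a $\le\!\beta$-hop walk $u = y_0, \ldots, y_t = v$ in $G''$ of total length at most $(1+\epsilon) d_G(u,v)$, and argue that the algorithm realizes this walk via Phase~1 BF updates along $G'$-edges plus Phase~1.5 updates along the $G$-paths witnessing the hopset edges (by \propertyref{prop:hop}). If $v$ was inserted only in Phase~2, then $b_v(u) = d_{zv} + b_z(u)$ for the minimizing $z \in \tilde{C}'(u)$, and I would exhibit a good candidate $z^*$: when $h(u,v) > B$, \claimref{claim:hit-path} applied to $V' = A_{\lceil k/2\rceil}$ places some $z^* \in V'$ on the shortest $u$-$v$ path in $G$ (the complementary case $h(u,v) \le B$ is handled directly); combining the Phase~1 bound on $b_{z^*}(u)$ with $d_{z^*v} \le (1+\epsilon) d_G(z^*, v)$ from \theoremref{thm:Nanongkai} yields $b_v(u) \le (1+\epsilon)^3 d_G(u, z^*) + (1+\epsilon) d_G(z^*, v) \le (1+\epsilon)^4 d_G(u,v)$ using additivity along the shortest path.

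The main obstacle is verifying, in the Phase~1 case, that the pruning rule \eqref{eq:rrr} does not block propagation along the chosen $(1+\epsilon)$-approximate walk. The difficulty is that some intermediate $y_j$ may be closer to $A_{i+1}$ than $v$, so its threshold $\hat{d}_{i+1}(y_j)/(1+\epsilon)^3$ could be small relative to the prefix length. I would resolve this by choosing the walk so that each prefix has length close to $d_G(u, y_j)$, and using $d_G(y_j, A_{i+1}) \ge d_G(v, A_{i+1}) - d_G(y_j, v)$ together with $v \in \tilde{C}(u) \subseteq C(u)$ (so that $d_G(u,v) < d_G(v, A_{i+1})$) and the estimate \eqref{eq:stretc} relating $\hat{d}_{i+1}$ to $d_G(\cdot, A_{i+1})$ to obtain enough slack in the pruning threshold; any $y_j$ that is inadvertently pruned corresponds to a vertex whose omission is compatible with the loose inclusion \eqref{eq:app-cluster}.
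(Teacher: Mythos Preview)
Your lower-bound argument is fine and matches the paper. The upper-bound argument, however, has a real gap in exactly the place you flag as ``the main obstacle'', and your proposed resolution does not close it.

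The difficulty is this: to push the Bellman--Ford message along the $\beta$-hop path $P$ in $G''$, each intermediate vertex $z$ must satisfy the pruning rule \eqref{eq:rrr}, i.e., $d_P(u,z) < \hat d_{i+1}(z)/(1+\epsilon)^3$. Your plan is to bound $d_P(u,z) \le (1+\epsilon)d_G(u,v) - d_G(z,v)$ and then use the triangle inequality $d_G(z,A_{i+1}) \ge d_G(v,A_{i+1}) - d_G(z,v)$ together with $v \in C(u)$, i.e., $d_G(u,v) < d_G(v,A_{i+1})$. But this only yields $(1+\epsilon)d_G(u,v) < (1+\epsilon)d_G(v,A_{i+1})$, which after subtracting $d_G(z,v)$ gives at best something like $(1+\epsilon)d_G(z,A_{i+1}) + \epsilon d_G(z,v)$ on the right---the inequality goes the \emph{wrong way} by a factor of roughly $(1+\epsilon)^4$. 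Neither \eqref{eq:stretc} nor ``choosing the walk so prefixes are close to $d_G(u,y_j)$'' helps: the former only relates $\hat d_{i+1}$ to $d_G(\cdot,A_{i+1})$ up to $(1+\epsilon)$ in the unhelpful direction, and the latter is not something you can enforce in $G''$. The sentence about pruned $y_j$ being ``compatible with \eqref{eq:app-cluster}'' conflates membership of $v$ in $\tilde C(u)$ with propagation through $y_j$; if $y_j$ is pruned, the message simply never reaches $v$ along this path, and you have no alternative path to offer. The same issue recurs in your Phase-2 case: you need $z^\star \in \tilde C'(u)$ before you can invoke a Phase-1 bound on $b_{z^\star}(u)$, and establishing that membership runs into the identical slack problem.

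The paper's fix is to argue by contradiction: assume $b_v(u) > (1+\epsilon)^4 d_G(u,v)$. Since $v$ actually joined, it satisfies \eqref{eq:rrr} (or \eqref{eq:rrro}, or \eqref{eq:y-t-u} in Phase~2), so $b_v(u) < \hat d_{i+1}(v)/(1+\epsilon)^3$. Chaining the contradiction hypothesis with this membership inequality gives $(1+\epsilon)d_G(u,v) < \hat d_{i+1}(v)/(1+\epsilon)^4$, which is precisely the extra $(1+\epsilon)^4$ slack needed to verify \eqref{eq:rrr} at every intermediate $z$ on $P$. Once that is shown, $b_v(u) \le d_P(u,v) \le (1+\epsilon)d_G(u,v)$, contradicting the hypothesis. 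In short, the missing idea is that the membership condition on $v$ (not merely $v\in C(u)$) combined with the assumption that $b_v(u)$ is large is what manufactures the slack; a direct argument from $v\in C(u)$ alone cannot succeed.
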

\begin{proof}
First we prove for $v\in\tilde{C}'(u)$ added at phase 1. Note that the left hand side of \eqref{eq:ttre} can be verified by induction on the iteration in which $b_v(u)$ was last updated. The base case $u=v$ clearly holds, assume it holds for $v'$ (the virtual parent of $v$). Recall that $w''$ is the weight function in $G''$. We have
\[
b_v(u)=w''(v,v')+b_{v'}(u)\ge d_{G''}(v,v')+d_G(u,v')\stackrel{\eqref{eq:g''}}{\ge} d_G(u,v)~.
\]

We now turn to the right hand side of \eqref{eq:ttre}. Seeking contradiction, assume
\begin{equation}
\label{eq:bvu}
b_v(u)>(1+\epsilon)^4d_G(u,v)~.
\end{equation}
 Let $P$ be the shortest $\beta$-hops path in $G''$ from $u$ to $v$, and we will show (by induction) that every vertex $z$ on $P$, which lies $h$ hops from $u$, must join $\tilde{C}'(u)$ with value $b_z(u)\le d_P(u,z)$ by the  iteration $h$  of the Bellman-Ford exploration of phase 1. The base case for $z=u$ clearly holds. Fix any other $z\in P$ with $h$ hops to $u$ on $P$, and assume it holds for $p$, the neighbor of $z$ on $P$ (the one closer to $u$), so we have that $b_p(u)\le d_P(u,p)$ by iteration $h-1$. At iteration $h$, $p$ will broadcast its value $b_p(u)$, and thus $z$ could have updated its value to be $b_p(u)+w''(p,z)$. In particular,
\begin{equation}\label{eq:bz}
b_z(u)\le b_p(u)+w''(p,z)\le d_P(u,p)+w''(p,z)=d_P(u,z).
\end{equation}
We now argue $b_z(u)$ satisfies \eqref{eq:rrr}, which would cause $z$ to join $\tilde{C}'(u)$,
\begin{eqnarray}\nonumber
b_z(u)&\stackrel{\eqref{eq:bz}}{\le}&d_P(u,z)\\\nonumber
&=&d_P(u,v)-d_P(v,z)\\\label{eq:rerere}
&\le& d_{G''}^{(\beta)}(u,v)-d_G(v,z)\\\nonumber
&\stackrel{\eqref{eq:g''}}{\le}&(1+\epsilon)d_G(u,v)-d_G(v,z)\\\label{eq:rerer}
&\stackrel{(\ref{eq:bvu})}{\le}&\frac{b_v(u)}{(1+\epsilon)^2}-d_G(v,z)\\\label{eq:terer}
&\stackrel{\eqref{eq:rrr}}{<}&\frac{\hat{d}_{i+1}(v)}{(1+\epsilon)^4}-d_G(v,z)\\\nonumber
&\stackrel{\eqref{eq:stretc}}{\le}&\frac{d_G(v,A_{i+1})-d_G(v,z)}{(1+\epsilon)^3}\\\nonumber
&\le&\frac{d_G(z,A_{i+1})}{(1+\epsilon)^3}\\\nonumber
&\stackrel{\eqref{eq:stretc}}{\le}&\frac{\hat{d}_{i+1}(z)}{(1+\epsilon)^3}~,
\end{eqnarray}
where \eqref{eq:rerere} uses that $P$ is the shortest $\beta$-hops path in $G''$, and \eqref{eq:rerer} uses the contradiction assumption (\ref{eq:bvu}) (note that it was used with the term $(1+\epsilon)^3$ rather than $(1+\epsilon)^4$).
Hence $z$ joins $\tilde{C}'(u)$, and so $b_v(u) \le d_P(u,v)$. Hence
\[
b_v(u)\le d_P(u,v)= d_{G''}^{(\beta)}(u,v)\stackrel{\eqref{eq:g''}}{\le}(1+\epsilon)d_G(u,v),
\]
which contradicts our assumption that \eqref{eq:ttre} does not hold.

 We now turn to vertices $v\in\tilde{C}'(u)$ who joined in phase 1.5. The left hand side holds since if $(x,y)\in F$ is the hop-set edge that triggered the addition of $v$, and $P'$ is the path in $G'$ realizing this edge, we have that $b_v(u)=d_{P'}(v,x)+b_x(u)\ge d_G(v,x)+d_G(x,u)\ge d_G(v,u)$. For the right hand side, note that we only used the fact that $v$ joined in phase 1 at \eqref{eq:terer}, so we can repeat the argument, replacing the use of \eqref{eq:rrr} by \eqref{eq:rrro}. We indeed lose a factor of $1+\epsilon$, but the inequality is still valid, yielding the same contradiction.

Finally, we turn to $v\in \tilde{C}(u)$ joining at phase 2. Note that for each such $v$, there exists some $x\in V'$ for which $v$ sets its value to be $b_v(u)=d_{vx}+b_x(u)\ge d_G(v,x)+d_G(x,u)\ge d_G(v,u)$, which proves the left hand side of \eqref{eq:ttre}. For the right hand side, consider first the case that $h(v,u)\le B$. Since $v$ could update $b_v(u)$ directly from the broadcast of $u$ itself, we have
\[
b_v(u)\le 0+d_{vu}\stackrel{\eqref{eq:duv}}{\le}(1+\epsilon)d_G^{(B)}(v,u)=(1+\epsilon)d_G(v,u)~.
\]
The other case is when $h(v,u)>B$, but then \claimref{claim:hit-path} (with $i=\lceil k/2\rceil$) suggests that there exists $x\in V'$ on the shortest path in $G$ from $v$ to $u$, with $h(v,x)\le B$. In particular, $d_G^{(B)}(x,v)=d_G(x,v)$. Again seeking contradiction, assume \eqref{eq:ttre} does not hold for $v$. Let $P$ be the shortest (at most) $\beta$-hops path from $u$ to $x$ in $G''$. We claim that every $z\in P$ must have joined $\tilde{C}'(u)$ at phase 1. To see this by induction, fix $z\in P$ with $h$ hops from $u$ on $P$, and assume $p$ (the neighbor of $z$ closer to $u$) did join by the $h-1$ iteration of Bellman-Ford, with $b_p(u)\le d_P(u,p)$. When $p$ broadcasts $b_p(u)$ at step $h$, then indeed $b_z(u)\le b_p(u)+w''(p,z)=d_P(u,z)$. Now,
\begin{eqnarray}
\nonumber
b_z(u)&\le& d_P(u,z)\\
\nonumber
&\le&d_{G''}^{(\beta)}(u,x)-d_P(z,x)\\
\label{eq:third}
&\stackrel{\eqref{eq:g''}}{\le}&(1+\epsilon)d_G(u,x)-d_G(z,x)\\
\nonumber
&=&(1+\epsilon)[d_G(u,v)-d_G(x,v)]-d_G(z,x)\\
\nonumber
&\le&\frac{b_v(u)}{(1+\epsilon)^3}-d_G(x,v)-d_G(z,x)\\
\nonumber
&\stackrel{\eqref{eq:y-t-u}}{<}&\frac{\hat{d}_{i+1}(v)}{(1+\epsilon)^4}-d_G(x,v)-d_G(z,x)\\
\nonumber
&\stackrel{\eqref{eq:pivot}}{\le}&\frac{d_G(v,A_{i+1})-d_G(x,v)-d_G(z,x)}{(1+\epsilon)^3}\\
\nonumber
&\le&\frac{d_G(z,A_{i+1})}{(1+\epsilon)^3}\\
\nonumber
&\le&\frac{\hat{d}_{i+1}(z)}{(1+\epsilon)^3}~.
\end{eqnarray}
((\ref{eq:third})  is because $x$ lies on the shortest $u-v$ path in $G$.)

This implies $b_z(u)$ satisfies \eqref{eq:rrr} and thus $z$ indeed joins $\tilde{C}'(u)$ by iteration $h$ of phase 1. In particular, $x$ joins by the end of phase 1, and broadcasts $b_x(u)$ at phase 2. Then we have that
\[
b_v(u)\le b_x(u)+d_{xv}\stackrel{\eqref{eq:ttre}}{\le}(1+\epsilon)^4d_G(u,x)+(1+\epsilon)d_G^B(x,v)\le(1+\epsilon)^4d_G(u,v)~,
\]
(Recall that $d_{xv}$ is the value computed by the algorithm of Theorem \ref{thm:Nanongkai}.)
This yields a contradiction  to (\ref{eq:bvu}) and concludes the proof.
\end{proof}

The following lemma shows that the sets $\tilde{C}(u)$ satisfy the requirement from approximate clusters. The proof is similar to that of \lemmaref{lem:bv}, though it uses the definition of $C_\epsilon(u)$, rather than the (contradiction) assumption that $b_v(u)$ is large.
\begin{lemma}\label{lem:app-cluster}
For any $u\in A_i\setminus A_{i+1}$, the set $\tilde{C}(u)$ satisfies \eqref{eq:app-cluster}.
\end{lemma}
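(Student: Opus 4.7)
\textbf{Proof plan for \lemmaref{lem:app-cluster}.}

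I would establish the two inclusions of \eqref{eq:app-cluster} in turn. For $\tilde{C}(u) \subseteq C(u)$: let $v \in \tilde{C}(u)$. The left inequality of \lemmaref{lem:bv} yields $d_G(u,v) \le b_v(u)$. The joining condition in whichever phase $v$ was added (\eqref{eq:rrr}, \eqref{eq:rrro}, or \eqref{eq:y-t-u}) implies $b_v(u) < \hat{d}_{i+1}(v)/(1+\epsilon)$, and combined with the right inequality of \eqref{eq:stretc} this gives $d_G(u,v) < d_G(v,A_{i+1})$, so $v \in C(u)$.

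The harder direction is $C_{6\epsilon}(u) \subseteq \tilde{C}(u)$. Fix $v \in C_{6\epsilon}(u)$, so $(1+6\epsilon)d_G(u,v) < d_G(v,A_{i+1}) \le \hat{d}_{i+1}(v)$. I would split into three cases depending on where $v$ lies. In the case $v \in V'$, I would mimic the induction of \lemmaref{lem:bv} along the shortest $\beta$-hops path $P$ from $u$ to $v$ in $G''$ (which exists by \eqref{eq:g''}), proving by induction on hop-distance from $u$ that every $z \in P$ joins $\tilde{C}'(u)$ in phase 1 with $b_z(u) \le d_P(u,z)$. The induction step reduces to verifying \eqref{eq:rrr} for $z$, via the chain
\begin{equation*}
b_z(u) \le d_P(u,z) \le (1+\epsilon)d_G(u,v) - d_G(v,z) < \frac{d_G(v,A_{i+1})}{(1+\epsilon)^4} - d_G(v,z) \le \frac{d_G(z,A_{i+1})}{(1+\epsilon)^4} \le \frac{\hat{d}_{i+1}(z)}{(1+\epsilon)^3},
\end{equation*}
where the strict step uses $v \in C_{6\epsilon}(u)$ together with $(1+\epsilon)^5 \le 1+6\epsilon$ (valid for $\epsilon = 1/(48k^4)$), taking the role played by the contradiction hypothesis in \lemmaref{lem:bv}. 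In particular $v$ itself joins, so $v \in \tilde{C}(u)$.

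If $v \notin V'$ and $h(v,u) \le B$, then $u \in V' \cap \tilde{C}'(u)$ trivially broadcasts $b_u(u)=0$ in phase 2, and by \eqref{eq:duv} we have $d_{vu} \le (1+\epsilon)d_G(v,u)$; a direct computation using $v \in C_{6\epsilon}(u)$ and $(1+\epsilon)^2 \le 1+6\epsilon$ verifies \eqref{eq:y-t-u}. If instead $h(v,u) > B$, I would apply \claimref{claim:hit-path} to the closest hitter $x \in V'$ on the shortest $v$-to-$u$ path in $G$ to obtain $h(v,x) \le B$ (any hitter closer to $v$ would contradict closestness). A one-line triangle computation then gives $(1+6\epsilon)d_G(u,x) < d_G(x,A_{i+1}) - 6\epsilon d_G(x,v)$, so $x \in C_{6\epsilon}(u)$; by the previous case $x \in \tilde{C}'(u)$ and broadcasts $b_x(u) \le (1+\epsilon)^4 d_G(u,x)$ in phase 2 (using \lemmaref{lem:bv}). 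Hence $d_{vx}+b_x(u) \le (1+\epsilon)^4[d_G(v,x)+d_G(u,x)] = (1+\epsilon)^4 d_G(u,v) < \hat{d}_{i+1}(v)/(1+\epsilon)$, again using $(1+\epsilon)^5 \le 1+6\epsilon$, so $v$ joins $\tilde{C}(u)$ via $x$.

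\textbf{Main obstacle.} The delicate point is the tight book-keeping of $(1+\epsilon)$ factors in the hard direction: losses accumulated from the hopset distortion, the approximate SPT, the distance estimates $d_{uv}$, and the approximate pivot values $\hat{d}_{i+1}$ must all fit inside the $6\epsilon$ slack afforded by membership in $C_{6\epsilon}(u)$. The choice $\epsilon = 1/(48k^4)$ is exactly what guarantees $(1+\epsilon)^5 \le 1+6\epsilon$, which is the inequality that both the inductive chain and the Case 2b estimate need in order to close.
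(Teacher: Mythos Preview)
Your proposal is correct and follows essentially the same approach as the paper's proof: verify $\tilde{C}(u)\subseteq C(u)$ via \lemmaref{lem:bv} and the joining conditions, then prove $C_{6\epsilon}(u)\subseteq\tilde{C}(u)$ by an induction along the $\beta$-hop shortest path in $G''$ (for $v\in V'$) and a two-subcase analysis based on $h(v,u)\lessgtr B$ (for general $v$). The only noteworthy difference is in the subcase $h(v,u)>B$: the paper re-runs the inductive chain from scratch for the hitter $x$, whereas you observe directly that $x\in C_{6\epsilon}(u)$ (your one-line triangle computation) and then invoke your already-proved $V'$ case together with \lemmaref{lem:bv}; this is a slightly cleaner reduction but yields the same conclusion.
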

\begin{proof}

For the right hand side of \eqref{eq:app-cluster}, note that if $v\in\tilde{C}'(u)$, then
\[
d_G(u,v)\stackrel{\eqref{eq:ttre}}{\le}b_v(u)\stackrel{\eqref{eq:rrr}\wedge\eqref{eq:rrro}}{<}\frac{\hat{d}_{i+1}(v)}{(1+\epsilon)^2} \stackrel{\eqref{eq:stretc}}{\le}d_G(v,A_{i+1})~,
\]
so $v\in C(u)$ as well.
 For the left hand side of \eqref{eq:app-cluster} (at this point we only show that $\tilde{C}'(u)\supseteq C_{6\epsilon}(u)\cap V'$), consider $v\in C_{6\epsilon}(u)\cap V'$, and let $P$ be the (at most) $\beta$-hops shortest path from $v$ to $u$ in $G''$. It suffices to show that every vertex $y$ along this path which is $h$ hops from $u$, will join $\tilde{C}'(u)$ and have $b_y(u)\le d_P(y,u)$ by the  iteration  $h$ of Bellman-Ford in phase 1. Assume (by induction) that $p$, the predecessor of $y$ on $P$, joins $\tilde{C}'(u)$ and satisfies $b_p(u)\le d_P(p,u)$ by iteration $h-1$. Thus, $p$ sends at iteration $h$ the value $b_p(u)$. Since $b_y(u)\le w''(y,p)+b_p(u)\le w''(y,p)+d_P(u,p)=d_P(u,y)$, it remains to show that this value of $b_y(u)$ satisfies \eqref{eq:rrr}, and thus $y$ joins $\tilde{C}'(u)$. To this end,

\begin{eqnarray*}
b_y(u)&\le&d_P(u,y)\\
&\le&d_{G''}^{(\beta)}(u,v)-d_P(y,v)\\
&\stackrel{\eqref{eq:g''}}{\le}&(1+\epsilon)d_G(u,v)-d_G(y,v)\\
&\le&\frac{(1+\epsilon)d_G(v,A_{i+1})}{1+6\epsilon}-d_G(y,v)\\
&<&\frac{d_G(v,A_{i+1})-d_G(y,v)}{(1+\epsilon)^3}\\
&\le&\frac{d_G(y,A_{i+1})}{(1+\epsilon)^3}\\
&\stackrel{\eqref{eq:stretc}}{\le}&\frac{\hat{d}_{i+1}(y)}{(1+\epsilon)^3}~.
\end{eqnarray*}
where the fourth inequality uses that $v\in C_{6\epsilon}(u)$ (recall \eqref{eq:eps-clust}).
 This implies $v$ will join $\tilde{C}'(u)$ in phase 1.

We now prove that \eqref{eq:app-cluster} holds for $\tilde{C}(u)$. For the right hand side, let $y\in \tilde{C}(u)\setminus \tilde{C}'(u)$, then there exists $v\in V'$ for which $y$ satisfies \eqref{eq:y-t-u}. So we obtain
\[
d_G(y,u)\le d_G(y,v)+d_G(v,u)\stackrel{\eqref{eq:duv}\wedge\eqref{eq:ttre}}{\le}d_{yv}+b_v(u)\stackrel{\eqref{eq:y-t-u}}{<}\frac{\hat{d}_{i+1}(y)}{1+\epsilon}\stackrel{\eqref{eq:stretc}}{\le} d_G(y,A_{i+1})~.
\]
This implies that $y\in C(u)$. For the left hand side of \eqref{eq:app-cluster}, assume $y\in C_{6\epsilon}(u)$. Consider first the case that $h(u,y)\le B$. Then when $u$ broadcasts $b_u(u)=0$ at phase 2, $y$ will add itself to $\tilde{C}(u)$ because
\begin{equation}\label{eq:upb-y}
d_{yu}+0\stackrel{\eqref{eq:duv}}{\le}(1+\epsilon)d_G^{(B)}(y,u)=(1+\epsilon)d_G(y,u) \stackrel{\eqref{eq:eps-clust}}{\le} \frac{1+\epsilon}{1+6\epsilon}\cdot d_G(y,A_{i+1})\stackrel{\eqref{eq:stretc}}{<}\frac{\hat{d}_{i+1}(y)}{1+\epsilon}~.
\end{equation}
The other case is that $h(y,u)>B$. Then by \claimref{claim:hit-path} there is a vertex $v\in V'$ on the shortest path from $y$ to $u$ so that $h(y,v)\le B$. We now argue that $v\in \tilde{C}'(u)$, by a similar (though slightly more involved) argument as above. To see this, consider the shortest path $P$ with (at most) $\beta$-hops in $G''$ from $u$ to $v$, and we claim that each vertex $z$ on this path with $h$ hops from $u$, will join $\tilde{C}'(u)$ with $b_z(u)\le d_P(u,z)$ by iteration $h$ of the Bellman-Ford of phase 1. Again by induction, at step $h$ the vertex $z$ heard $b_p(u)\le d_P(u,p)$ from its predecessor $p$ on $P$. Then indeed $b_z(u)\le b_p(u)+w''(p,z)\le d_P(u,z)$. Now we show that $z$ joins $\tilde{C}'(u)$.
\begin{eqnarray}\nonumber
b_z(u)&\le& d_P(u,z)\\\nonumber
&=&d_{G''}^{(\beta)}(u,v)-d_P(z,v)\\\nonumber
&\stackrel{\eqref{eq:g''}}{\le}&(1+\epsilon)d_G(u,v)-d_G(z,v)\\\label{eq:fdfd}
&=&(1+\epsilon)[d_G(u,y)-d_G(y,v)]-d_G(z,v)\\\label{eq:fdfdf}
&\le&\frac{(1+\epsilon)d_G(y,A_{i+1})}{1+6\epsilon}-d_G(y,v)-d_G(z,v)\\\nonumber
&\le&\frac{d_G(y,A_{i+1})-d_G(y,v)-d_G(z,v)}{(1+\epsilon)^3}\\\nonumber
&\le&\frac{d_G(z,A_{i+1})}{(1+\epsilon)^3}\\\nonumber
&\stackrel{\eqref{eq:stretc}}{\le}&\frac{\hat{d}_{i+1}(z)}{(1+\epsilon)^3}~,
\end{eqnarray}
where \eqref{eq:fdfd} uses that $v$ is on the shortest path in $G$ from $u$ to $y$, and \eqref{eq:fdfdf} uses that $y\in C_{6\epsilon}(u)$.
In particular, we have shown $v\in \tilde{C}'(u)$ by the end of phase 1. It follows that $v$ will broadcast the value $b_v(u)\le d_{G''}^{(\beta)}(u,v)$ in the second phase. Since $h(y,v)\le B$,
\begin{eqnarray*}\label{eq:eq1}
b_y(u)&\le&d_{yv}+b_v(u)\\
&\stackrel{\eqref{eq:duv}}{\le}&(1+\epsilon)d_G^B(y,v) + d_{G''}^{(\beta)}(u,v)\\
&\stackrel{\eqref{eq:g''}}{\le}&(1+\epsilon)[d_G(y,v) + d_G(u,v)]\\
&=&(1+\epsilon)d_G(y,u)\\
&\stackrel{\eqref{eq:eps-clust}}{\le}&\frac{1+\epsilon}{1+6\epsilon}\cdot d_G(y,A_{i+1})\\
&<&\frac{\hat{d}_{i+1}(y)}{1+\epsilon}~.
\end{eqnarray*}

So $y$ will be added to $\tilde{C}(u)$. This concludes the proof of the lemma.

\end{proof}

Our next goal to to argue that the parent setting ensures that root-vertex distances in each cluster tree satisfy \eqref{eq:tree-preserve}, i.e., are approximated up to a factor $(1+\eps)^4$. It suffices to prove the following claim.

\begin{claim}\label{claim:par}
For any $u\in A_i\setminus A_{i+1}$, and any $v\in\tilde{C}(u)$, if $p=p(v)$ is the (real) parent of $v$ with corresponding value $b_p(u)$, then $p\in \tilde{C}(u)$ and
\begin{equation}\label{eq:show-parent}
b_v(u)\ge w(v,p)+b_p(u)~.
\end{equation}
\end{claim}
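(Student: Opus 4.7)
To prove both $p\in\tilde{C}(u)$ and~\eqref{eq:show-parent}, I perform a case analysis on the phase of the construction that placed $v$ in its cluster. The two easy cases are the small-scale trees ($i<\lceil k/2\rceil$) and the middle level (odd $k$). For small scales $v$ was reached by exact Bellman--Ford, so $p$ is the predecessor on a shortest $u$--$v$ path, which puts $p$ in $C(u)=\tilde{C}(u)$ and gives $b_v(u)=w(v,p)+b_p(u)$ by construction of the iteration. For the middle level the text already does the work: the displayed chain $b_p(u)\le b_v(u)-w(v,p)<d_G(v,A_{i+1})-d_G(v,p)\le d_G(p,A_{i+1})$ simultaneously certifies $p\in\tilde{C}(u)$ and, via~\eqref{eq:p-u}, yields~\eqref{eq:show-parent}.

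For the large scales ($i\ge\lceil k/2\rceil$) the real parent $p$ is produced, in each of phases~1, 1.5, and~2, by \remarkref{rem:parents} applied to the approximate distance computed by \theoremref{thm:Nanongkai}, giving $d_{vv'}\ge w(v,p)+d_{pv'}$ (and analogously $d_{vx}\ge w(v,p)+d_{px}$ in phase~2). Adding $b_{v'}(u)$ to both sides and using how $b_v(u)$ was assigned produces the candidate value
\[
d_{pv'}+b_{v'}(u)\le b_v(u)-w(v,p),
\]
which $p$ obtains when $v'$ broadcasts $b_{v'}(u)$ globally in phase~2 (note $d_{pv'}$ is finite since it is bounded above by $d_{vv'}-w(v,p)$, so \theoremref{thm:Nanongkai} did compute it and $p$ knows it). Once we verify that this candidate passes the phase-2 admission threshold, the definition of $b_p(u)$ as a minimum over admitted candidates yields $b_p(u)\le b_v(u)-w(v,p)$, which is~\eqref{eq:show-parent}, and simultaneously certifies $p\in\tilde{C}(u)$.

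The main obstacle is thus checking $(1+\epsilon)[d_{pv'}+b_{v'}(u)]<\hat{d}_{i+1}(p)$. The plan is to combine the threshold under which $v$ entered---namely $b_v(u)<\hat{d}_{i+1}(v)/(1+\epsilon)^j$ with $j=3$ in phase~1 (by~\eqref{eq:rrr}), $j=2$ in phase~1.5 (by~\eqref{eq:rrro}), and $j=1$ in phase~2 (by~\eqref{eq:y-t-u})---with the Lipschitz-type property $\hat{d}_{i+1}(p)\ge\hat{d}_{i+1}(v)-w(v,p)$, which is trivial when $\hat{d}_{i+1}$ is the exact distance and, for the larger scales, holds because the approximate SPT of \theoremref{thm:SPT} returns $\hat{d}_{i+1}(v)$ as a realized $G$-path length through the chosen pivot. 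Plugging these two ingredients in, the required inequality reduces in phase~2 to $w(v,p)/(1+\epsilon)\le w(v,p)$, and in phases~1 and~1.5 to strictly weaker statements whose left-hand sides are negative---all of which hold trivially. This completes the proof.
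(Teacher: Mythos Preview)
Your argument has a genuine gap at the ``Lipschitz-type property'' $\hat d_{i+1}(p)\ge \hat d_{i+1}(v)-w(v,p)$. For the large scales $i\ge\lceil k/2\rceil$ the value $\hat d_{i+1}$ is only an \emph{approximate} distance returned by \theoremref{thm:SPT}, and such approximations need not be $1$-Lipschitz along edges. A concrete obstruction: if the shortest path from $v$ to $A_{i+1}$ goes through $p$, one may have $\hat d_{i+1}(v)=(1+\epsilon)d_G(v,A_{i+1})$ while $\hat d_{i+1}(p)=d_G(p,A_{i+1})$, and then $\hat d_{i+1}(v)-w(v,p)=(1+\epsilon)d_G(p,A_{i+1})+\epsilon\,w(v,p)>\hat d_{i+1}(p)$. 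Your justification (``$\hat d_{i+1}(v)$ is a realized $G$-path length'') does not help even if granted: knowing that $\hat d_{i+1}(v)$ equals the length of some $v$--$z$ path says nothing about $\hat d_{i+1}(p)$ from below, because $p$ computes its own estimate independently and need not route through that path. Nothing in \theoremref{thm:SPT} promises consistency between the estimates at neighboring vertices.

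The paper avoids this by never comparing two approximate values directly. Instead it passes through \emph{exact} distances: from the phase-1 threshold it gets $b_p(u)<\hat d_{i+1}(v)/(1+\epsilon)^3-d_G(v,p)$, then uses the SPT upper bound $\hat d_{i+1}(v)\le(1+\epsilon)d_G(v,A_{i+1})$ to replace $\hat d_{i+1}(v)$ by $d_G(v,A_{i+1})$ (spending one $(1+\epsilon)$ factor), applies the honest triangle inequality $d_G(v,A_{i+1})-d_G(v,p)\le d_G(p,A_{i+1})$, and finally uses the SPT lower bound $d_G(p,A_{i+1})\le\hat d_{i+1}(p)$ to return to $\hat d_{i+1}(p)$. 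This is exactly why the thresholds in~\eqref{eq:rrr} and~\eqref{eq:rrro} carry the exponents $3$ and $2$ rather than $1$: the extra $(1+\epsilon)$ factors are there precisely to absorb the round-trip between approximate and exact distances that your Lipschitz shortcut tries to skip. Your phases~1 and~1.5 can be repaired by this route; for phase~2 (where only one $(1+\epsilon)$ of slack is present) the same detour yields $b_p(u)<\hat d_{i+1}(p)$ rather than $\hat d_{i+1}(p)/(1+\epsilon)$, so the paper's ``similar to~\eqref{eq:donn}'' is itself somewhat terse---but your Lipschitz inequality is not the fix, since it is simply false in this regime.
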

Once this claim is established, we get by induction on the depth of the tree that $d_{\tilde{C}(u)}(u,v)\le b_v(u)$. The base case when $u=v$ clearly holds, assume for $p=p(v)$ that $d_{\tilde{C}(u)}(u,p)\le b_p(u)$, and now
\[
d_{\tilde{C}(u)}(u,v)=w(v,p)+d_{\tilde{C}(u)}(u,p)\le w(v,p)+b_p(u)\stackrel{\eqref{eq:show-parent}}{\le}b_v(u)~.
\]
Combining this with \lemmaref{lem:bv} establishes \eqref{eq:tree-preserve}.

\begin{proof}[Proof of \claimref{claim:par}]

Consider first the case that $v\in \tilde{C}'(u)$, and there are two sub-cases to consider. In the first sub-case, $v$ updated $b_v(u)$ in phase 1 from some $x\in \tilde{C}'(u)$, who sent $b_x(u)$ over the (virtual) edge $(x,v)\in E'$ (which is not a hop-set edge). Then by the definition of $G'$,  $b_v(u)=w'(x,v)+b_x(u)=d_{xv}+b_x(u)$, the virtual parent of $v$ is set to $x$, and the real parent is thus $p=p_x(v)$. Since $p$ receives a message from $x$ in the second phase, it sets $b_p(u)$ to  at most $d_{px}+b_x(u)$. It follows that
\begin{equation}\label{eq:oop}
b_p(u)\le d_{px}+b_x(u)\stackrel{\eqref{eq:p-u}}{\le}d_{vx}-w(v,p)+b_x(u)=b_v(u)-w(v,p)~,
\end{equation}
which satisfies \eqref{eq:show-parent}. But we must also argue that $p$ indeed joins the tree $\tilde{C}(u)$. Here we use the relaxed condition of \eqref{eq:y-t-u} (compared to \eqref{eq:rrr}), and obtain that
\begin{eqnarray}\label{eq:donn}
b_p(u)&\stackrel{\eqref{eq:oop}}{\le}& b_v(u)-w(v,p)\\\label{eq:donn1}
&\stackrel{\eqref{eq:rrr}}{<}&\frac{\hat{d}_{i+1}(v)}{(1+\epsilon)^3}-d_G(v,p)\\\label{eq:donn2}
&\stackrel{\eqref{eq:pivot}}{\le}& \frac{d_G(v,A_{i+1})-d_G(v,p)}{1+\epsilon}\\\nonumber
&\le& \frac{d_G(p,A_{i+1})}{1+\epsilon}\\\nonumber
&\le& \frac{\hat{d}_{i+1}(p)}{1+\epsilon}~,
\end{eqnarray}
which satisfies \eqref{eq:y-t-u}.

The second sub-case is that $v$ updated $b_v(u)$ in phase 1 or 1.5 due to some hop-set edge $(x,y)\in F$, so that $v$ lies on the path $P$ in $G'$ realizing this edge (it could be that $y=v$, if it happened in phase 1). We set $b_v(u)= b_x(u)+d_P(x,v)$, and the virtual parent of $v$ is $v'\in V'$, its neighbor on $P$ which is closer to $x$.  Recall that in $G'$, the weight $w'(v,v')=d_{vv'}$, so that
\begin{equation}\label{eq:ppatgh}
d_P(x,v)=d_P(x,v')+d_{vv'}~.
\end{equation}
The real parent of $v$ is set as $p=p_{v'}(v)$. Since $v'$ broadcasts in phase 2 its estimate $b_{v'}(u)\le b_x(u)+d_P(x,v')$, it follows that
\begin{eqnarray*}
b_p(u)&\le& d_{pv'} +b_{v'}(u)\\
&\stackrel{\eqref{eq:p-u}}{\le}&(d_{vv'}-w(v,p))+(b_x(u)+d_P(x,v'))\\
&\stackrel{\eqref{eq:ppatgh}}{=}& b_x(u)+d_P(v,x)-w(v,p)\\
&=& b_v(u)-w(v,p)~,
\end{eqnarray*}
as required in \eqref{eq:show-parent}. Again, to see that $p\in \tilde{C}(u)$, we repeat the calculation of \eqref{eq:donn} with one change: In \eqref{eq:donn1}, replace the use of \eqref{eq:rrr} by \eqref{eq:rrro}, which will have the factor of $(1+\epsilon)^3$ replaced by $(1+\epsilon)^2$, but this suffices to satisfy \eqref{eq:donn2}.

We turn to the case that $v\in \tilde{C}(u) \setminus \tilde{C}'(u)$. Let $x\in \tilde{C}'(u)$ be the vertex which broadcasts in phase 2 a value $b_x(u)$ minimizing $b_v(u)=d_{vx}+b_x(u)$. The parent of $v$ is thus set to be $p=p_x(v)$, and now
\[
b_p(u)\le d_{px}+b_x(u)\stackrel{\eqref{eq:p-u}}{\le}d_{vx}-w(v,p)+b_x(u)=b_v(u)-w(v,p)~,
\]
The proof that $p\in\tilde{C}(u)$ is again similar to \eqref{eq:donn}.

\end{proof}


\paragraph{Running Time.} We noted that the number of rounds required for the preprocessing is $\tilde{O}(n^{1/2+1/(2k)}+D)\cdot\min\{2^{\tilde{O}(\sqrt{\log n})},(\log n)^{O(k)}\}$. Since by \eqref{eq:app-cluster} we have $\tilde{C}'(u)\subseteq C(u)$, then \remarkref{rem:siize} suggests that $v\in V'$ sends at most $\tilde{O}(n^{1/k})$ distance estimates $b_v(\cdot)$. As $|V'|\le \tilde{O}(n^{1/2})$, by \lemmaref{lem:pipe}, implementing a single Bellman-Ford iteration will take $\tilde{O}(n^{1/2+1/k}+D)$ rounds. As there are $\beta$ iterations in phase 1 (and a single one in phases 1.5 and 2), the total number of rounds is $\tilde{O}(n^{1/2+1/k}+D)\cdot\min\{2^{\tilde{O}(\sqrt{\log n})},(\log n)^{O(k)}\}$. (For odd $k$, both $|V'|\cdot n^{1/k},B\le \tilde{O}(n^{1/2+1/(2k)})$, so we get $\tilde{O}(n^{1/2+1/(2k)}+D)\cdot\min\{2^{\tilde{O}(\sqrt{\log n})},(\log n)^{O(k)}\}$ rounds.)


\section{Routing Based on Approximate Clusters}\label{sec:ana}

In this section we show that approximate pivots and approximate clusters suffice for a compact routing scheme, and prove our main result.

\begin{theorem}\label{thm:routing}
Let $G=(V,E)$ be a weighted graph with $n$ vertices and hop-diameter $D$, and let $k\ge 1$ be a parameter. Then there exists a routing scheme with stretch at most $4k-5+o(1)$, labels of size $O(k\log^2n)$ and routing tables of size $O(n^{1/k}\log^2n)$, that can be computed in a distributed manner within $(n^{1/2+1/k}+D)\cdot \min\{(\log n)^{O(k)},2^{\tilde{O}(\sqrt{\log n})}\}$ rounds, and for odd $k$ only $(n^{1/2+1/(2k)}+D)\cdot \min\{(\log n)^{O(k)},2^{\tilde{O}(\sqrt{\log n})}\}$ rounds.
\end{theorem}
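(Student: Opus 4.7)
The plan is to instantiate the Thorup--Zwick routing template with the approximate pivots $\hat{z}_i(v)$ and approximate cluster trees $\tilde{C}(u)$ constructed in \sectionref{sec:route}. Every vertex $v$ stores, for each approximate cluster $\tilde{C}(u)$ containing it, the routing information produced by the distributed tree-routing scheme of \sectionref{sec:tree}; these collectively form the routing table of $v$. The label of $v$ consists of the $k$ approximate pivots $\hat{z}_0(v),\dots,\hat{z}_{k-1}(v)$ together with the tree-labels of $v$ in the trees $\tilde{C}(\hat{z}_i(v))$ for those indices $i$ at which $v$ lies in the tree. To route from $s$ to $t$, the source scans the scales $i=0,1,\dots,k-1$ and uses the smallest $i$ for which $s\in\tilde{C}(\hat{z}_i(t))$ (checkable locally from $s$'s table); such an $i$ always exists because $A_k=\emptyset$ forces $C_{6\epsilon}(u)=V$, hence $\tilde{C}(u)=V$, for every $u\in A_{k-1}$. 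The packet is then forwarded $s\leadsto\hat{z}_i(t)\leadsto t$ along the tree $\tilde{C}(\hat{z}_i(t))$ via the accompanying tree-label.

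The stretch bound will follow the classical TZ induction, amended to absorb the two sources of slackness. If $s\notin\tilde{C}(\hat{z}_j(t))$ at some earlier scale $j<i$, then \eqref{eq:app-cluster} gives $s\notin C_{6\epsilon}(\hat{z}_j(t))$, whence \eqref{eq:eps-clust} yields $d_G(s,A_{j+1})\le(1+6\epsilon)\,d_G(s,\hat{z}_j(t))$. Combined with the approximate-pivot bound $d_G(t,\hat{z}_{j+1}(t))\le(1+\epsilon)\,d_G(t,A_{j+1})$ from \eqref{eq:pivot} and two triangle inequalities, I obtain a recursion of the shape $\Delta_{j+1}\le(2+O(\epsilon))\Delta+(1+O(\epsilon))\Delta_j$, where $\Delta_j:=d_G(s,\hat{z}_j(t))$ and $\Delta:=d_G(s,t)$. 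Unrolling to the successful scale $i$, adding $d_G(\hat{z}_i(t),t)$ for the second leg, and multiplying by the tree distortion factor $(1+\epsilon)^4$ from \eqref{eq:tree-preserve}, the path length is at most the exact TZ bound $(4k-5)\Delta$ times the cumulative overhead $(1+O(\epsilon))^{O(k)}$. Since $\epsilon=1/(48k^4)$, this overhead collapses to $1+O(1/k^3)=1+o(1)$, giving stretch $4k-5+o(1)$.

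The size and running-time bounds are then immediate. By \remarkref{rem:siize}, every vertex lies in at most $O(n^{1/k}\log n)$ approximate clusters w.h.p., and coupled with the $O(\log n)$-word per-vertex storage of the tree-routing scheme the table size is $O(n^{1/k}\log^2 n)$. The label stores $k$ pivot names and $k$ tree-labels, totaling $O(k\log^2 n)$. For the running time, \theoremref{thm:app-pivot-cluster} delivers all pivots and approximate cluster trees within the claimed round budget, and the distributed tree-routing preprocessing of \sectionref{sec:tree} must then be invoked in parallel across all cluster trees. Since every vertex participates in only $\tilde{O}(n^{1/k})$ trees, the congestion is $\tilde{O}(n^{1/k})$, and by \lemmaref{lem:pipe} the parallel preprocessing fits inside the $(n^{1/2+1/k}+D)\cdot\min\{(\log n)^{O(k)},2^{\tilde{O}(\sqrt{\log n})}\}$ budget, with the improved $n^{1/2+1/(2k)}$ factor for odd $k$ inherited from \theoremref{thm:app-pivot-cluster}.

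The hard part will be the $(1+O(\epsilon))$ bookkeeping through the TZ induction. The exact TZ argument is already a delicate telescope of triangle inequalities, and each appearance of an approximate pivot or of the defining property of $C_{6\epsilon}$ contributes a $(1+\epsilon)$ factor that must be collected without inflating the final additive slack beyond $o(1)$; verifying that $\epsilon=1/(48k^4)$ keeps the $O(k)$-fold cumulative loss together with the $(1+\epsilon)^4$ tree distortion below $1+o(1)$ is the most technical step. A secondary obstacle is orchestrating the distributed tree-routing preprocessing (itself a nontrivial CONGEST task deferred to \sectionref{sec:tree}) in parallel across all $O(n)$ approximate cluster trees without blowing up the round count; this is resolved by the low-overlap bound of \remarkref{rem:siize}.
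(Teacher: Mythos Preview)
Your overall architecture matches the paper's, but there is a genuine gap in the routing procedure and, consequently, in the stretch analysis. You check only whether $s\in\tilde{C}(\hat{z}_i(t))$, implicitly assuming that the destination $t$ always lies in the tree $\tilde{C}(\hat{z}_i(t))$ rooted at its own approximate pivot. With \emph{exact} pivots and clusters this is automatic, but here it can fail: from $d_G(t,\hat{z}_i(t))\le(1+\epsilon)d_G(t,A_i)$ one cannot conclude $d_G(t,\hat{z}_i(t))<d_G(t,A_{i+1})/(1+6\epsilon)$, so $t$ need not belong to $C_{6\epsilon}(\hat{z}_i(t))$ and hence may be excluded from $\tilde{C}(\hat{z}_i(t))$. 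The paper addresses this explicitly: \algref{alg:find-tree} halts only when \emph{both} $u$ and $v$ lie in $\tilde{C}(\hat{z}_i(v))$, and the label of $v$ marks the indices where $v$ is absent so the source can detect this.

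This omission propagates into your stretch argument. The paper splits the failed iterations into two sets, $I_u=\{i:u\notin\tilde{C}(\hat{z}_i(v))\}$ and $I_v=\{i:u\in\tilde{C}(\hat{z}_i(v))\text{ but }v\notin\tilde{C}(\hat{z}_i(v))\}$, and the recursion for the latter case is genuinely different (it uses $v\notin C_{6\epsilon}(\hat{z}_i(v))$ rather than the analogous statement for $u$; compare \eqref{eq:yi} with \eqref{eq:yiv}). Your single recursion $\Delta_{j+1}\le(2+O(\epsilon))\Delta+(1+O(\epsilon))\Delta_j$ covers only the $I_u$ case. You need to handle both cases and show that either one feeds into the same bound $d_G(u,\hat{z}_i(u))\le y_i$, after which the argument closes as you sketched. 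A minor additional point: for the parallel tree-routing preprocessing you should invoke \remarkref{rem:n-trees} (which gives $\tilde{O}(\sqrt{n\cdot s}+D)$ rounds for overlap $s=\tilde{O}(n^{1/k})$, hence $\tilde{O}(n^{1/2+1/(2k)}+D)$) rather than \lemmaref{lem:pipe} directly.
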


\paragraph{Construction.}
Apply \theoremref{thm:app-pivot-cluster} on $G$ to obtain approximate pivots and approximate clusters for all vertices. For each $0\le i\le k-1$ and each $u\in A_i\setminus A_{i+1}$, construct the routing scheme for trees given by \theoremref{thm:tree-route} on $\tilde{C}(u)$.
(We postpone the proof of Theorem \ref{thm:tree-route}, i.e., the description of the algorithm that constructs routing tables and labels for each tree, to Section \ref{sec:tree}.) Specifically,  in each tree, every vertex stores a table of size $O(\log n)$ and has a label of size $O(\log^2n)$. The routing table of each $v\in V$ consists of all the tree-routing tables, for every $u\in V$ such that $v\in \tilde{C}(u)$. The label of $v$ consists of the tree-labels for the (at most) $k$ trees $\tilde{C}(\hat{z}_0(v)),\dots,\tilde{C}(\hat{z}_{k-1}(v))$, where $\hat{z}_i(v)$ is the approximate $i$-pivot of $v$ (note that it could be that $v$ does not belong to some of these trees, the label of $v$ will mark these as missing).
By \remarkref{rem:siize} there are at most $O(n^{1/k}\log n)$ trees containing $v$, and as each tree-table is of size $O(\log n)$, the routing table size is as promised. Since each tree-label is of size $O(\log^2n)$, the label size also obeys the given bound.

\paragraph{Finding a Tree.}
Assume we would like to route from vertex $u$ to vertex $v$. The routing protocol will find a vertex $w=\hat{z}_i(v)$ for some $0\le i\le k-1$, such that the stretch of the (unique) path from $u$ to $v$ in the tree $\tilde{C}(w)$ is at most $4k-5+o(1)$. 
The algorithm to find such a vertex appears in \algref{alg:find-tree}. 

\begin{algorithm}
\caption{$\texttt{Find-tree}(u,v)$}\label{alg:find-tree}
\begin{algorithmic}[1]
\STATE $i\leftarrow 0$;
\WHILE {$|\{u,v\}\cap\tilde{C}(\hat{z}_i(v))|<2$}
\STATE $i\leftarrow i+1$;
\ENDWHILE
\RETURN $\hat{z}_i(v)$;
\end{algorithmic}
\end{algorithm}


We note that our algorithm differs slightly from that of \cite{TZ01-spaa}, since it could be the case that $v$ does not belong to the cluster centered at the pivot of $v$ at level $i$. For this reason we keep searching until we find a cluster containing both $u,v$.

First we claim that the algorithm is correct. Note that the definition of approximate cluster \eqref{eq:app-cluster} implies that $\tilde{C}(x)=V$ for every $x\in A_{k-1}$ (this holds since the distance to $A_k$ is defined as $\infty$). Therefore when $i=k-1$ it must be that both $u,v\in\tilde{C}(\hat{z}_{k-1}(v))$, and the algorithm indeed halts. The tree $\tilde{C}(w)$ contains both $u,v$ (where $w=\hat{z}_i(v)$ is the vertex returned by the algorithm), by definition. Finally, the information from the label of $v$ indicates which of these trees contain it, and the routing table of $u$ also lists the names of all trees containing it. So we can run the algorithm from $u$ knowing the label of $v$.

Once $u$ computes the root $w$, it appends $w$ to the message header along with the label of $v$. From this point on the header does not change, and we route in the tree $\tilde{C}(w)$. Since this routing is exact, it remains to bound the stretch incurred by using the tree.

\paragraph{Bounding Stretch.}
We distinguish between two types of iterations $i$ that the algorithm did not stop at. Let $I_u=\{0\le i\le k-1~:~u\notin \tilde{C}(\hat{z}_i(v))\}$ be the iterations in which $\{u,v\}\cap\tilde{C}(\hat{z}_i(v))$ is empty or contains just $v$, and let $I_v=\{0\le i\le k-1~:~\{u,v\}\cap\tilde{C}(\hat{z}_i(v))=\{u\}\}$ be the remaining iterations in which the algorithm did not halt. For any $i\in I_u$, by \eqref{eq:app-cluster} it holds that $C_{6\epsilon}(\hat{z}_i(v))\subseteq\tilde{C}(\hat{z}_i(v))$. Hence, we have $u\notin C_{6\epsilon}(\hat{z}_i(v))$, which suggests that
\begin{eqnarray}\label{eq:stretch-u}
d_G(u,\hat{z}_{i+1}(u))&\stackrel{\eqref{eq:pivot}}{\le}&(1+\epsilon)d_G(u,A_{i+1})\nonumber\\ &\stackrel{\eqref{eq:eps-clust}}{\le}&(1+\epsilon)(1+6\epsilon)d_G(u,\hat{z}_i(v))\nonumber\\
&\le&(1+8\epsilon)d_G(u,\hat{z}_i(v))~.
\end{eqnarray}
Similarly for $i\in I_v$,
\begin{eqnarray}\label{eq:stretch-v}
d_G(v,\hat{z}_{i+1}(v))&\le&(1+\epsilon)d_G(v,A_{i+1})\nonumber\\
&\le&(1+\epsilon)(1+6\epsilon)d_G(v,\hat{z}_i(v))\nonumber\\
&\le&(1+8\epsilon)d_G(v,\hat{z}_i(v))~.
\end{eqnarray}

Define the following values $y_0=d_G(u,v)$, $x_0=0$, and for $0<i\le k-1$ define recursively $y_i=(1+10\epsilon)[y_0+x_{i-1}]$, and $x_i=(1+\epsilon)[y_0+y_i]$. Assume that the algorithm halted at iteration $i'$. Then for each $0\le i\le i'$ we claim that
\begin{equation}\label{eq:xi}
d_G(v,\hat{z}_i(v))\le x_i~.
\end{equation}
We verify the validity of \eqref{eq:xi} by induction. The base case trivially holds since $\hat{z}_0(v)=v$ and $x_0=0$. Fix $0<i\le i'$. The algorithm did not halt at iteration $i-1$. If it is the case that $i-1\in I_u$, then we have that
\begin{eqnarray}\label{eq:yi}
d_G(u,\hat{z}_i(u))&\stackrel{\eqref{eq:stretch-u}}{\le}&(1+8\epsilon)d_G(u,\hat{z}_{i-1}(v))\\
&\le&\nonumber(1+8\epsilon)[d_G(u,v)+d_G(v,\hat{z}_{i-1}(v))]\\
&\stackrel{\eqref{eq:xi}}{\le}&\nonumber(1+8\epsilon)[y_0+x_{i-1}]\\
&\le&\nonumber y_i~.
\end{eqnarray}
The other case is that $i-1\in I_v$. Since $\hat{z}_i(u)\in A_i$ we obtain
\begin{eqnarray}\label{eq:yiv}
d_G(u,\hat{z}_i(u))&\stackrel{\eqref{eq:pivot}}{\le}&(1+\epsilon)d_G(u,A_i)\\\nonumber
&\le&(1+\epsilon)d_G(u,\hat{z}_i(v))\\\nonumber
&\le&(1+\epsilon)[d_G(u,v)+d_G(v,\hat{z}_i(v))]\\\nonumber
&\stackrel{\eqref{eq:stretch-v}}{\le}&(1+\epsilon)[d_G(u,v)+(1+8\epsilon)d_G(v,\hat{z}_{i-1}(v))]\\\nonumber
&\le&(1+10\epsilon)[y_0+x_{i-1}]\\\nonumber
&=&y_i
\end{eqnarray}

We conclude that in both cases,
\begin{eqnarray}\label{eq:piv}
d_G(v,\hat{z}_i(v))&\le& (1+\epsilon)d_G(v,A_i)\\
&\le&\nonumber(1+\epsilon)d_G(v,\hat{z}_i(u))\\
&\le&\nonumber (1+\epsilon)[d_G(u,v)+d_G(u,\hat{z}_i(u))]\\
&\stackrel{\eqref{eq:yi}\wedge\eqref{eq:yiv}}{\le}&\nonumber(1+\epsilon)[y_0+y_i]\\
&=&\nonumber x_i~.
\end{eqnarray}

We now have a recurrence $x_i=(1+\epsilon)(2+10\epsilon)y_0 + (1+\epsilon)(1+10\epsilon)x_{i-1}$. Solving it, yields
\[
x_i = (1+\epsilon)(2+10\epsilon)y_0\sum_{j=0}^{i-1}[(1+\epsilon)(1+10\epsilon)]^j~.
\]
We use the fact that for any real $x\ge 0$ and positive integer $r$ such that $xr\le 1/2$, the following holds $(1+x)^r\le 1+2xr$. Now we may bound $x_i$ by
\begin{eqnarray}\label{eq:boundxi}
x_i&\le&(2+13\epsilon)y_0\sum_{j=0}^{i-1}(1+12\epsilon)^j\\\nonumber
&\le&(2+13\epsilon)y_0\sum_{j=0}^{i-1}(1+24\epsilon j)\\\nonumber
&\le&(2+13\epsilon)y_0(i+12\epsilon i^2)\\\nonumber
&\le&(2+13\epsilon)y_0(i+1/(4k^2))~,
\end{eqnarray}
where in the last inequality we use that $\epsilon=\frac{1}{48k^4}\le\frac{1}{48k^2i^2}$.
Finally, using that $i'\le k-1$ and that $w=\hat{z}_{i'}(v)$, the stretch is given by
\begin{eqnarray*}
\lefteqn{d_{\tilde{C}(w)}(u,w)+d_{\tilde{C}(w)}(w,v)}\\&\stackrel{\eqref{eq:tree-preserve}}{\le}& (1+\epsilon)^4[d_G(u,w)+d_G(v,w)]\\
&\stackrel{\eqref{eq:xi}}{\le}&(1+5\epsilon)[d_G(u,v)+2x_{i'}]\\
&\stackrel{\eqref{eq:boundxi}}{\le}& (1+5\epsilon)[1+(4+26\epsilon)(k-1+1/(4k^2))]\cdot d_G(u,v)\\
&\le& (4k-3+o(1))\cdot d_G(u,v)~.
\end{eqnarray*}

In order to improve the stretch to the promised $4k-5+o(1)$, we use same trick as in \cite{TZ01-spaa}. Each vertex $u\in A_0\setminus A_1$ will store in its routing table all the labels for vertices in $C(u)$, which enables to save an additive term of $d_G(u,v)$ in both $x_i$ and $y_i$. We refer the reader to \cite{TZ01-spaa} for the details.

\paragraph{Running time.}
By \theoremref{thm:app-pivot-cluster}, the time required to compute the approximate pivots and the trees $\tilde{C}(u)$ for every $u\in A_i\setminus A_{i+1}$ is $(n^{1/2+1/k}+D)\cdot \min\{(\log n)^{O(k)},2^{\tilde{O}(\sqrt{\log n})}\}$, when $k$ is even, and
$(n^{1/2+1/(2k)}+D)\cdot \min\{(\log n)^{O(k)},2^{\tilde{O}(\sqrt{\log n})}\}$, when $k$ is odd. By \claimref{claim:number-of-clusters},  each vertex participates in at most $\tilde{O}(n^{1/k})$ trees.  Hence, by  \remarkref{rem:n-trees},
which will be stated and proven in Section \ref{sec:tree},
  it will take only $\tilde{O}(n^{1/2+1/(2k)}+D)$ rounds to compute the routing tables for all trees in parallel. We conclude that the total number of rounds is $(n^{1/2+1/k}+D)\cdot \min\{(\log n)^{O(k)},2^{\tilde{O}(\sqrt{\log n})}\}$, for even $k$,
and $(n^{1/2+1/(2k)}+D)\cdot \min\{(\log n)^{O(k)},2^{\tilde{O}(\sqrt{\log n})}\}$, for odd.

\section{Distance Estimation}\label{sec:sketch}

In this section we sketch how the routing tables can be used for distance estimation, and prove the following.
\begin{theorem}\label{thm:sketch}
Let $G=(V,E)$ be a weighted graph with $n$ vertices and hop-diameter $D$, and let $k\ge 1$ be a parameter. Then there exists a distance estimation scheme, that assigns a sketch of size $O(n^{1/k}\log n)$ for each node, and has stretch $2k-1+o(1)$, that can be computed by a randomized distributed algorithm within $(n^{1/2+1/k}+D)\cdot \min\{(\log n)^{O(k)},2^{\tilde{O}(\sqrt{\log n})}\}$ rounds (whp).
In the case of odd $k$, the running time can be decreased to $(n^{1/2 + 1/(2k)} + D) \cdot \min\{(\log n)^{O(k)},2^{\tilde{O}(\sqrt{\log n})}\}$.
Furthermore, the distance computation can be done in time $O(k)$.
\end{theorem}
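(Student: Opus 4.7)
The plan is to adapt the Thorup--Zwick distance oracle to the approximate pivots and approximate clusters already produced by \theoremref{thm:app-pivot-cluster}. Concretely, the sketch of each vertex $v$ will consist of: (i) the pair $(\hat{z}_i(v), \hat{d}_i(v))$ for every $0 \le i \le k-1$; and (ii) the bunch $\tilde{B}(v) := \{u \in V : v \in \tilde{C}(u)\}$, stored together with the value $b_v(u)$ computed for $v$ when joining $\tilde{C}(u)$ in \sectionref{sec:route}. By \remarkref{rem:siize}, $|\tilde{B}(v)| \le 4n^{1/k}\log n$ w.h.p., so the total sketch size is $O(n^{1/k}\log n)$ words. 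All these quantities are already held locally by $v$ at the end of \theoremref{thm:app-pivot-cluster}, so no additional communication is needed beyond the round complexity stated there.

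The query procedure mirrors the original Thorup--Zwick algorithm. Given the sketches of $u$ and $v$, initialize $w \leftarrow u$, $i \leftarrow 0$, and iterate: while $w \notin \tilde{B}(v)$, set $i \leftarrow i+1$, swap $u$ and $v$, and set $w \leftarrow \hat{z}_i(u)$; upon termination output $\hat{d}_i(u) + b_v(w)$. As argued in \sectionref{sec:ana}, the loop halts by $i = k-1$ because $\tilde{C}(x) = V$ for every $x \in A_{k-1}$; by storing $\tilde{B}(v)$ as a hash table, each membership test is $O(1)$, so the query runs in $O(k)$ time as required.

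The stretch analysis mimics the one in \sectionref{sec:ana}, tracking the multiplicative $(1+\epsilon)$ slacks with $\epsilon = 1/(48k^4)$. Whenever the loop does not terminate at step $i$, we have $u \notin \tilde{C}(w)$, hence (by $C_{6\epsilon}(w) \subseteq \tilde{C}(w)$) also $u \notin C_{6\epsilon}(w)$, giving $d_G(u, A_{i+1}) \le (1+6\epsilon)\, d_G(u, w)$ and therefore $d_G(u, \hat{z}_{i+1}(u)) \le (1+8\epsilon)\, d_G(u, w)$ as in \eqref{eq:stretch-u}. Propagating this bound through the standard alternating induction gives, at the terminating iteration $i^* \le k-1$, that the pivot $w$ satisfies $d_G(u,w) + d_G(v,w) \le (2i^*+1)(1+O(k\epsilon))\, d_G(u,v)$; combining with $b_v(w) \le (1+\epsilon)^4 d_G(v,w)$ from \lemmaref{lem:bv} and $\hat{d}_i(u) \le (1+\epsilon) d_G(u,w)$ from \eqref{eq:pivot}, the returned estimate is at most $(2k-1)(1 + O(k\epsilon))\, d_G(u,v)$. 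With $\epsilon = 1/(48k^4)$ the additional factor is $1+o(1)$, yielding stretch $2k-1+o(1)$.

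The main subtlety is verifying that the three distinct sources of multiplicative error, namely the $(1+\epsilon)^4$ slack on $b_v(w)$ inherited from \lemmaref{lem:bv}, the $(1+\epsilon)$ slack on $\hat{d}_i$ from the approximate pivots, and the $(1+6\epsilon)$ slack from the definition of $\tilde{C}(u)$, compose coherently across the $k$ alternations so that the cumulative blowup is still $1+O(k\epsilon)$. This is handled exactly as in the routing analysis of \sectionref{sec:ana}, and the choice $\epsilon = 1/(48k^4)$ is tight enough. The round complexity is inherited from \theoremref{thm:app-pivot-cluster}, giving $(n^{1/2+1/k}+D)\cdot\min\{(\log n)^{O(k)},2^{\tilde{O}(\sqrt{\log n})}\}$ rounds for even $k$ and $(n^{1/2+1/(2k)}+D)\cdot\min\{(\log n)^{O(k)},2^{\tilde{O}(\sqrt{\log n})}\}$ rounds for odd $k$, as claimed.
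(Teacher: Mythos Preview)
Your proposal is correct and follows essentially the same approach as the paper: the sketch contents, the query procedure, and the stretch argument you describe coincide with what the paper does in \sectionref{sec:sketch} (the paper's \algref{alg:sketch} is exactly your alternating-pivot loop, and the paper likewise defers the stretch computation to the analysis of \sectionref{sec:ana}). If anything, your write-up is more explicit than the paper's, which leaves the stretch details to the reader; the only small omission is that you do not state the easy lower bound $\hat{d}_i(u)+b_v(w)\ge d_G(u,w)+d_G(v,w)\ge d_G(u,v)$, but this is immediate from \eqref{eq:stretc} and \lemmaref{lem:bv}.
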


Apply \theoremref{thm:app-pivot-cluster}, which computes all the approximate pivots and approximate clusters. Each vertex $v\in V$ include in its sketch for every $u\in V$ so that $v\in\tilde{C}(u)$, the pair $(u,b_v(u))$, where $b_v(u)$ is the approximate distance to $u$ computed in \sectionref{sec:route}. Also for every $0\le i\le k-1$, add $(\hat{z}_i(v),\hat{d}_i(v))$, which is the approximate $i$-pivot and distance to it. By \remarkref{rem:siize},  every sketch is of size $O(n^{1/k}\log n)$. The algorithm that computes a distance estimate given two sketches is similar to that of \cite{TZ01}. We state it formally in \algref{alg:sketch}.

\begin{algorithm}
\caption{$\texttt{Dist}(u,v)$}\label{alg:sketch}
\begin{algorithmic}[1]
\STATE $i\leftarrow 0$;
\STATE $w\leftarrow u$;
\WHILE {$v\notin\tilde{C}(w)$}
\STATE $i\leftarrow i+1$;
\STATE $(u,v)\leftarrow(v,u)$;
\STATE $w\leftarrow \hat{z}_i(u)$;
\ENDWHILE
\RETURN $\hat{d}_i(u)+b_v(w)$;
\end{algorithmic}
\end{algorithm}

Observe that the sketch contains all the relevant information for executing \algref{alg:sketch}. When the while loop terminates $v\in\tilde{C}(w)$, so it has the estimate $b_v(w)$, while $u$ stores the approximate distance $\hat{d}_i(u)$ to every one of its approximate pivots. The stretch analysis is a variant of the analysis of \cite{TZ01}, similar in spirit to that of \sectionref{sec:ana}. Roughly speaking, on the stretch $2k-1$ achieved by \cite{TZ01}, we pay a multiplicative factor of $(1+O(\epsilon))^k$ due to the fact that distances are approximated. However,  this boils down to an $o(1)$ additive term, since $\epsilon=\frac{1}{48k^4}$. We leave the details to the reader.

\section{Distributed Tree Routing}\label{sec:tree}

In this section we present a modification of the (exact) routing scheme of Thorup-Zwick for rooted trees, that can be implemented efficiently in a distributed manner. The price is that the size of the labels and tables increases by a factor of $\log n$, compared to what \cite{TZ01-spaa} achieved.
\begin{theorem}\label{thm:tree-route}
Fix a graph $G=(V,E)$ on $n$ vertices with hop-diameter $D$. For any tree $T$ which is a subgraph of $G$, there is a routing scheme with stretch 1, routing tables of size $O(\log n)$ and labels of size $O(\log^2n)$, that can be computed in a distributed manner within $\tilde{O}(\sqrt{n}+D)$ rounds.
\end{theorem}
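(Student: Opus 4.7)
The plan follows the two-level ``sample-and-contract'' idea sketched in the introduction: sample $S \subseteq V(T)$ of size $\tilde{O}(\sqrt{n})$ so that every long enough tree path hits $S$, run classical Thorup--Zwick tree routing on each of the resulting $\tilde{O}(\sqrt{n})$-depth fragments, and separately run a TZ scheme on the virtual contracted tree $\tilde{T}$ on $S$ that captures the ``global'' shape of $T$.

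Concretely, I would first include each $v \in V(T)$ in $S$ independently with probability $\Theta(\log n / \sqrt{n})$ and add a designated global root $r \in V(T)$ to $S$. A standard Chernoff plus union bound argument gives $|S| = \tilde{O}(\sqrt{n})$ w.h.p.\ and that every hop-path in $T$ of length $\tilde{\Omega}(\sqrt{n})$ contains a vertex of $S$. Then run a hop-bounded multi-source BFS on $T$ from $S$ for $\tilde{O}(\sqrt{n})$ rounds where $S$-vertices act as sinks (they record incoming IDs but do not re-broadcast). Each non-$S$ vertex then learns the $S$-vertices bounding its component of $T \setminus S$, and each $s \in S$ learns its adjacencies in $\tilde{T}$ (the virtual tree on $S$ whose edges join pairs of $S$-vertices lying in the closure of a common component of $T \setminus S$); by the sampling guarantee $\tilde{T}$ has only $\tilde{O}(\sqrt{n})$ edges.

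Using a BFS tree of $G$ and \lemmaref{lem:pipe}, convergecast the edges of $\tilde{T}$ to $r$ in $\tilde{O}(\sqrt{n} + D)$ rounds, and at $r$ run the standard TZ tree routing on $\tilde{T}$ rooted at $r$: compute DFS intervals, heavy-path decomposition, and an $O(\log^2 n)$-bit label plus $O(\log n)$-bit table for every $s \in S$. Broadcast each $s$'s label and table back in another $\tilde{O}(\sqrt{n} + D)$ rounds. Each $s$ now knows its $\tilde{T}$-parent $p_{\tilde{T}}(s)$, so an additional $\tilde{O}(\sqrt{n})$ rounds of local propagation let each non-$S$ vertex $v$ identify its unique upper boundary $b(v) \in S$ (the boundary $S$-vertex that is a $\tilde{T}$-ancestor of all other boundaries of $v$'s component). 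Define $T_s$ as $s$ together with all $v$ with $b(v) = s$ and with the $\tilde{T}$-children of $s$ adjoined as leaves; each $T_s$ has hop-depth $\tilde{O}(\sqrt{n})$ and each vertex lies in at most two such subtrees. In parallel over all $s \in S$, collect $T_s$ at $s$, compute TZ tree routing on $T_s$ centrally, and broadcast every vertex's local $O(\log^2 n)$-bit label and $O(\log n)$-bit table back; this costs $\tilde{O}(\sqrt{n})$ rounds.

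The final label of $v$ concatenates $b(v)$, $v$'s local label in $T_{b(v)}$, and $b(v)$'s $\tilde{T}$-label, totaling $O(\log^2 n)$ bits (each TZ tree label carries $O(\log n)$ heavy-path entries of $O(\log n)$ bits, accommodating network-assigned port numbers). Routing dispatches by cases: if $b(u) = b(v)$ route locally in $T_{b(v)}$; otherwise route $u \to b(u)$ locally, then $b(u) \to b(v)$ along $\tilde{T}$ (realizing each $\tilde{T}$-edge by the local TZ routing in the subtree in which both endpoints live as root and leaf), then $b(v) \to v$ locally. Every edge used lies in $T$, so the stretch is $1$, and the total round count is $\tilde{O}(\sqrt{n} + D)$. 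The main obstacle is that $T$ may have hop-depth $\Omega(n)$, so a direct global BFS on $T$ is prohibitive; the sampling trick resolves this by shrinking the only truly global step to operations on the $\tilde{O}(\sqrt{n})$-vertex tree $\tilde{T}$, which is cheap to gather at $r$ via $G$'s BFS tree, while all other work remains local inside $\tilde{O}(\sqrt{n})$-depth fragments.
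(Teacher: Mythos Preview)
Your high-level plan matches the paper's: sample $\tilde{O}(\sqrt{n})$ vertices to cut $T$ into shallow fragments, run TZ locally in each fragment, and run TZ globally on the contracted virtual tree. However, two steps in your execution do not work as stated.

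\textbf{Collecting $T_s$ at $s$.} You propose to ``collect $T_s$ at $s$, compute TZ tree routing on $T_s$ centrally.'' The sampling bounds only the \emph{depth} of $T_s$ by $\tilde{O}(\sqrt{n})$; its \emph{size} can still be $\Omega(n)$. Take a broom: a path of length $\sqrt{n}$ from the root to a hub with $n-\sqrt{n}$ leaves. The fragment containing the hub has $\Theta(n)$ vertices but depth $O(\sqrt{n})$. Shipping $|T_s|$ vertex IDs to $s$ through the single tree edge below $s$ costs $\Omega(|T_s|)$ rounds, so this step can take $\Omega(n)$ rounds, not $\tilde{O}(\sqrt{n})$. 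The paper avoids this by computing the local TZ data \emph{distributedly} inside each fragment: a bottom-up convergecast of subtree sizes (one word per vertex per round), followed by a top-down ``parallel DFS'' in which each vertex assigns entry/exit intervals to its children from those sizes, and then a top-down pass propagating the list of non-heavy edges. All three passes take $O(\text{depth})=\tilde{O}(\sqrt{n})$ rounds regardless of $|T_s|$.

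\textbf{Stretch $1$.} Your routing rule ``route $u \to b(u)$ locally, then $b(u) \to b(v)$ along $\tilde{T}$, then $b(v)\to v$'' does not trace the unique $T$-path. If $b(v)$ is a strict $\tilde{T}$-descendant of $b(u)$ and $\ell=\lca_T(u,v)$ lies strictly below $b(u)$ inside $T_{b(u)}$, your rule climbs from $u$ past $\ell$ up to $b(u)$ and then descends back through $\ell$. Concretely, on a path $v_0\!-\!v_1\!-\!\cdots$ with $S=\{v_0,v_{10},v_{20},\ldots\}$, routing $v_5\to v_{15}$ traverses $20$ edges instead of $10$. The paper's protocol makes the decision \emph{per hop}: an intermediate vertex $x\in T_w$ first compares the $T'$-DFS interval of $v$ with that of $w$; if $v$ lies in a $T'$-descendant of $w$, then $x$ routes \emph{locally within $T_w$} toward the portal vertex $y$ leading to the correct $T'$-child, never detouring to $w$. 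For this to work, the label of $v$ must carry, for each non-heavy $T'$-edge on the root-to-$b(v)$ path, the \emph{local} TZ label of the corresponding portal vertex (so that any $x\in T_w$ can route toward it). That augmentation is precisely why the paper's labels are $O(\log^2 n)$ \emph{words}; your $O(\log^2 n)$-\emph{bit} label does not contain enough information to realize $\tilde{T}$-edges without the detour.
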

\begin{remark}\label{rem:n-trees}
If we are given $n$ trees, each a sub-graph of $G=(V,E)$, so that each vertex $v\in V$ participates in at most $s$ trees, then routing schemes for all the trees can be computed in $\tilde{O}(\sqrt{n\cdot s}+D)$ rounds.
\end{remark}
Let us first recall briefly how (a simplified version of) the TZ scheme works. For every non-leaf vertex, define a {\em heavy child} as the child with the largest subtree. Run a Depth First Search (DFS) on the tree, each vertex $u$ receives an entry time $a_u$ and exit time $b_u$.
The routing table stored at each vertex $u$ consists of the name and port number of its parent $p(u)$ in the tree, the name (and port) of its heavy child $h(u)$, and the numbers $a_u,b_u$. The label of a vertex $u$ contains the number $a_u$ and additional $\lceil\log n\rceil$ words: consider the path $P$ from the root to $u$, for every vertex $w$ on this path such that its heavy child is not on $P$, we append to the label of $u$ the name of $w$ and the port number leading from $w$ to its child on $P$. The observation is that whenever the path does not use the heavy child, the size of the subtree shrinks by a factor of at least 2, so this can happen only $\lceil\log n\rceil$ times. In order to route from $u$ to $v$, every intermediate vertex $x$ does as follows: if $a_x=a_v$ we are done, if $a_v\notin(a_x,b_x)$, we know the DFS did not find $v$ in the subtree rooted at $x$, so $x$ sends the message to its parent, and if $a_v\in(a_x,b_x)$ then $v$ lies in the subtree of $x$. In the latter case, $x$ examines the label of $v$ for an entry of the form $(x,x')$, if it exists it sends to its child $x'$, if not, $x$ sends the message to its heavy child.

In order to obtain a scheme that runs efficiently in a distributed manner, we cannot compute heavy children and run DFS on the entire tree. Instead, we shall apply certain variants of the TZ-scheme in two levels. Let $T$ be a tree on the vertices $V(T)\subseteq V$, rooted at $z$. For $u\in V(T)$, denote by $p(u)$ the parent of $u$ in $T$. We assume that every vertex knows the names of its parent and its children. The basic idea is to randomly sample $\gamma \ge c \cdot  \ln n$, for a sufficiently large constant $c$, vertices $U\subseteq V$. ($\gamma$ here is a parameter.) Each vertex in $V$ chooses itself to $U$ independently with probability $\frac{\gamma}{n}$. Partition the tree $T$ into subtrees according to the vertices of $U(T)=(U\cap V(T))\cup \{z\}$, by removing each edge from a vertex of $U(T)$ to its parent.
 Note that this partitions $T$ into a forest $F$ of $|U(T)|$ subtrees, each of these subtrees is rooted at a vertex of $U(T)$. For $w\in U(T)$, denote by $T_w$ the subtree in $F$ rooted in $w$. Let $T'$ denote the virtual tree on the vertices of $U(T)$, where $w$ is a parent of $u$ in $T'$, if $p(u)$ lies in $T_w$. We shall devise a routing scheme for each $T_w$, and a global scheme that routes in $T'$. We begin by bounding the depth of each subtree; let $B={{4n} \over \gamma} \cdot \ln n$.
\begin{claim}\label{claim:tree-size}
With high probability,  $|U| = O(\gamma)$, and for each $w\in U(T)$, the tree $T_w$ has depth at most $B$.
\end{claim}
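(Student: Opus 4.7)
Both assertions follow from standard concentration plus a union bound; the only care needed is in phrasing the depth event so that it depends on a fixed set of at most $B$ random coin flips per path.

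For the size bound, $|U|$ is a sum of $n$ independent $\mathrm{Bernoulli}(\gamma/n)$ variables, so $\Expect[|U|]=\gamma$. Since $\gamma\ge c\ln n$ for a large constant $c$, a standard Chernoff bound (e.g.\ $\Prob[|U|>2\gamma]\le e^{-\gamma/3}$) gives $|U|=O(\gamma)$ with probability at least $1-n^{-\Omega(1)}$.

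For the depth bound, the first step is to rephrase the event. If some tree $T_w$ of the forest $F$ has depth exceeding $B$, then there is a downward path in $T$ of the form $w=x_0,x_1,\dots,x_{B+1}$ (all in $V(T)$) such that $x_1,\dots,x_{B+1}\notin U$ (otherwise the edge from some $x_j$ to its parent would have been removed, cutting $x_{B+1}$ off from $w$ in $F$). I would therefore fix any such downward path $P$ of exactly $B+1$ edges in $T$ and bound
\[
\Prob[x_1,\dots,x_{B+1}\notin U]=(1-\gamma/n)^{B+1}\le e^{-\gamma B/n}=e^{-4\ln n}=n^{-4},
\]
using the definition $B=(4n/\gamma)\ln n$ and independence of the samples. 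A union bound over the at most $n^2$ ordered pairs $(x_0,x_{B+1})$ of vertices in $T$ (each pair determining at most one such downward path, since $T$ is a tree) yields probability at most $n^{-2}$ that any $T_w$ has depth greater than $B$.

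Combining the two Chernoff/union-bound arguments by a final union bound gives that with high probability both $|U|=O(\gamma)$ and every subtree $T_w$ of $F$ has depth at most $B$, as claimed. The only mild subtlety, which I would flag explicitly in the write-up, is that we do not need $w$ itself to be sampled for the depth event: we only use that every vertex \emph{strictly} below $w$ on such a path avoids $U$, which is precisely what is required for those vertices to remain in the same component as $w$ after cutting the edges into $U(T)$. Everything else is routine.
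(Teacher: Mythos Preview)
Your proof is correct and follows essentially the same approach as the paper: a Chernoff bound for $|U|=O(\gamma)$, and for the depth bound, the observation that a too-deep subtree forces a length-$B$ downward path in $T$ avoiding $U$, whose probability is at most $(1-\gamma/n)^B\le n^{-4}$, followed by a union bound over the $O(n^2)$ endpoint pairs. Your write-up is in fact a bit more careful than the paper's, since you explicitly note that only the vertices strictly below $w$ need to miss $U$; the paper glosses over this point.
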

\begin{proof}
The first event holds with high probability by a simple Chernoff bound. For the second: by independence, the probability that a path $P$ in $T$ of length $B$ has $P\cap U=\emptyset$, is
\[
\left(1-\frac{\gamma}{n}\right)^{4n/\gamma \ln n}\le \frac{1}{n^4}~.
\]
Taking a union bound on the $O(n^2)$ possible paths (in a tree, choosing the path's endpoints determines it) completes the proof.
\end{proof}
\noindent{\bf Remark:} Observe that we still have high probability that the events of \claimref{claim:tree-size} hold over $n$ different trees of the Thorup-Zwick cover.

From now on assume the events of \claimref{claim:tree-size} hold.
The assignment has two phases.
\paragraph{Phase 1.} In the first phase we compute a routing scheme for each $T_w$ in the forest $F$, in parallel. In each round, every vertex $u$ that received messages from all its children, sends to its parent in $F$ the size of its subtree (by summing up the sizes of the subtrees of the children of $u$). By \claimref{claim:tree-size}, the depth of each tree in $F$ is at most $B$, and in each round we send one word per vertex. Hence after $B$ rounds every vertex knows the size of its subtree (in $F$), and in particular, can infer who is its heavy child.
Now each $w\in U(T)$ can start a parallel DFS of $T_w$ -- that is, every vertex assigns entry and exit times to all if its children in parallel (it is possible since it knows the sizes of every child's subtree). Each vertex in $T_w$ adds to its routing table $(p(x),h(x),a_x,b_x,w)$, which are the name of the parent of $x$, the heavy child of $x$, the entry and exit times, and the name $w$. This computation (parallel DFS) will also require $O(B)$ rounds, since all subtrees work in parallel.

The (local) label assignment for vertices in $T_w$ is done in the following manner. Starting from $w$ (which has empty label), every vertex $x$ that receives a label $\ell$ from its parent, and has children $x_1,\dots,x_l$, sends $\ell$ to its heavy child, and $\ell\circ(x,x_i)$ to $x_i$ for each non-heavy child $x_i$. The label $\ell(x)$ will consist of $a_x$ and the list $\ell$ of edges that was given to $x$.

\paragraph{Phase 2.} In the second phase we compute a routing scheme on $T'$. Every $u\in U(T)$ sends a message to its parent $x$ in $T$, and receives from $x$ the following message: $\ell(x)$, the name $w$ such that $x\in T_w$ (so that the edge $(w,u)$ should be in $T'$), and also the port number $e(x,u)$ of $x$ leading to $u$. Then every such $u$ broadcasts $((w,u),x,\ell(x), e(x,u))$ to the entire graph. Once the root vertex  $z$ has full information on $T'$, it may locally compute the TZ routing scheme for $T'$. The routing table given to $u\in U(T)$ is slightly different than in the usual scheme, as it will contain local routing information for the vertex leading to the heavy child. More formally, the table will be $(h'(u),\ell(y),e(y,h'(u)),a'_u,b'_u)$. Here $h'(u)$ is the name of heavy child of $u$ in $T'$, $y\in T_u$ is the portal vertex which is the parent of $h'(u)$ in $T$, and $e(y,h'(u))$ is the port of $y$ leading to $h'(u)$. Note that $z$ has the name, label and the appropriate port of $y$ when $h'(u)$ reported the edge $(u,h'(u))$. Finally $a'_u,b'_u$ are the entry and exit times of the DFS run by $z$ on $T'$.
Observe that $\ell(y)$ has size $O(\log n)$, and this term dominates the size of a routing table.
 There are at most $O(\gamma)$ such tables. Hence \lemmaref{lem:pipe} implies  that we can broadcast to the entire graph all these messages within $O(\gamma\log n+D)$ rounds. In addition, every vertex $u\in U(T)$ sends the routing table given to it to all the vertices in $T_u$. Since we can send the information inside each subtree in parallel, it will take only $O(B\log n)$ rounds.

The label assignment to the vertices of $T'$ is also modified, since for every possible edge taken in $T'$ which is not leading to a heavy child, we must add the local routing information. Fix $u\in U(T)$. Assume $((v_1,w_1),\dots,(v_l,w_l))$ is the list of all edges in the path of $T'$ from $z$ to $u$, so that each $w_i$ is a non-heavy child of $v_i$. Ordinarily, this list would have been the label of $u$ (along with $a'_u$). However, in order to be able to route in $T'$, we replace each such edge with $(v_i,w_i,\ell(x_i),e(x_i,w_i))$, where $x_i$ is the parent of $w_i$ in $T$, $\ell(x_i)$ is the label $x_i$ received in the first phase (for local routing within $T_{v_i}$), and $e(x_i,w_i)$ is the port leading from $x_i$ to $w_i$. Recall that $z$ knows the label and appropriate port of every such $x_i$. Since each $\ell(x_i)$ has size at most $O(\log n)$ words, and $l\le \log n$, we have that the label size is $O(\log^2n)$. As before, each $u\in U(T)$ propagates this label $\ell'(u)$ to every vertex in $T_u$. The number of rounds is therefore $O(\gamma \log^2n+D)$.

\paragraph{Protocol.}
The routing from $u$ to $v$ will be done as follows. Assume we have arrived to an intermediate vertex $x$ that lies in $T_w$. First $x$ checks if routing in $T'$ is required, by comparing $a'_v$ with $a'_x,b'_x$ (recall that $a'_v$ is part of the label of $v$, and the routing table of $x$ contains $a'_x=a'_w$ and $b'_x=b'_w$). If $a'_v=a'_x$ then $v\in T_w$, and we proceed to route inside $T_w$. If $a'_v\notin(a'_x,b'_x)$, we need to route to the subtree rooted at the parent of $w$ in $T'$, and if $a'_v\in(a'_x,b'_x)$ then we need to route to the appropriate child of $w$ in $T'$,

{\bf Routing inside $T_w$:} This is done exactly as in the TZ scheme, while considering the local routing tables of vertices in $T_w$ and $\ell(v)$. If $a_x=a_v$ we are done. If $a_v\notin(a_x,b_x)$ we route to the parent of $x$ (stored in the local routing table of $x$), and when $a_v\in(a_x,b_x)$, we inspect $\ell(v)$: if it contains an edge of the form $(x,x')$, for some $x'$, we route to $x'$. Otherwise to the heavy child of $x$ (the heavy child's name is also in the local routing table of $x$).

{\bf Routing to the parent of $w$ in $T'$:} This is simple, $x$ just routes to its parent, its name is stored in the local routing table of $x$. Eventually we will reach $w$ (since all vertices in $T_w$ have the same $\ell'$ label), and route from it to vertex in the tree of $w$'s parent in $T'$.

{\bf Routing to a child of $w$ in $T'$:} Here we inspect $\ell'(v)$, if it contains an entry of the form $(w,w',\ell(y),e(y,w'))$ then we know we have to route in $T'$ from $w$ to its child $w'$ in $T'$. Fortunately, the label $\ell(y)$ provides us the required routing information to route in $T_w$ to the portal vertex $y$ (that has $w'$ as a child in $T$). From $y$ we go to its child $w'$ using the port $e(y,w')$. If the label $\ell'(v)$ contains no such entry, then we know we need to route to the heavy child of $w$ in $T'$. Here the label of $v$ is useless, but we stored the label of $y'\in T_w$, the portal vertex which is the parent of this heavy child, in the routing table of each vertex of $T_w$. Using the label of $y'$ we can route locally in $T_w$, and from $y'$ route to $h'(w)$ (using the port number for heavy child stored in the routing table).

When constructing routing tables and labels for one single tree, the overall running time is
$O(\gamma \cdot \log^2 n + D) + O(B \cdot \log n) = O(\gamma \cdot \log^2 n + {n \over \gamma} \cdot \log^2 n + D)$, i.e.,
$O(D+ \sqrt{n} \cdot \log^2 n)$, by setting $\gamma = \sqrt{n}$.

\begin{proof}[Proof of \remarkref{rem:n-trees}]
To avoid high running time, we shall perform the routing tables and labels computations in parallel in all  cluster trees, while appending to each message the name of the relevant tree. In the first phase, which can be implemented in $\tilde{O}(\sqrt{n})$ rounds for each tree, we send information on the graph edges (every vertex notifies all its neighbors in each round), so the overhead due to participation in up to $s$ trees is only a factor of $s$. In the second phase, however, we broadcast messages to the entire graph.  So we need a bound on the number of these messages. For each tree $T'$ (which consists of the vertices of $U$ alone) we broadcast 2 messages per vertex: the first informing the root of its existence, its parent, and the local routing information. In the second message, the root broadcasts routing information and a label for the vertex. Each message is of size $O(\log^2n)$. By charging these messages to the vertices of $U$, each such vertex pays for 2 messages per tree containing it. But the number of these trees is at most $s$, so we need to broadcast at most $\tilde{O}(\sqrt{n}\cdot s)$ words. By \lemmaref{lem:pipe}, these can be broadcast to the entire graph in $\tilde{O}(\sqrt{n}\cdot s+D)$ rounds.

We next argue that this bound can be further improved to $\tO(\sqrt{n \cdot s} +D)$.

Every root $w$ of a tree $T_w$ in one of the forests $F$ (each cluster tree gives rise to one such a forest) tosses a starting time $\st(w)$
uniformly at random from the interval $[1,c \cdot \ln n \cdot \sqrt{ns}]$, for a sufficiently large constant $c$.
It then starts broadcasting to vertices of $T_w$ at time $20 \cdot \st(w)$. (It broadcasts to them the value $\st(w)$.)
Each round of this broadcast is replaced by stages consisting of 20 rounds each. Specifically, a vertex $x$ in $T_w$ that already received the message from its parent tries to deliver it to its children for 20 consecutive rounds. We will show that, whp, for every edge, on one of these rounds no congestion will be experienced. Only when these 20 rounds are over, the children of $x$ will start broadcasting.

Consider a specific  edge $e = (x,y)$ in a tree $T_w$. Let $w_1,w_2,\ldots,w_s$ be the roots of trees $T_{w_i}$ that contain this edge.
(Recall that, by Claim \ref{claim:number-of-clusters}, whp, $s = O(n^{1/k} \cdot \log n)$.)  Let $t_1,t_2,\ldots,t_s$ be the respective hop-distances between $w_i$ and the closer endpoint of $e_i$ to $w_i$. In other words, for every $i \in [s]$, if $w_i$ broadcasted a message over $T_{w_i}$, and no other messages would have interfered with its broadcast, then the broadcast of $w_i$ would traverse $e_i$ on step $t_i$. (For convenience, we number the steps starting from 0.)

For any index  $R$, the probability that the broadcast of $w_i$ will want to traverse $e$ on stage $R$, conditioned on the assumption that it experienced no congestion whatsoever before that, is the probability that $w_i$ starts broadcasting at stage $R-t_i$, i.e., this is equal to
$\Prob(\st(w_i) = R - t_i)$.
The latter probability is at most ${1 \over {c \sqrt{n  s} \ln n}}$. For a positive integer $\alpha \le s$, the probability that $\alpha$ cluster trees wish to employ $e$ on stage $R$, conditioned on the assumption that no congestion was experienced by any of them so far, is at most
$$\left({1 \over {c \ln n \cdot \sqrt{n s}}}\right)^\alpha \cdot {s \choose \alpha} ~\le ~
\left({s \over {c \ln n \cdot \sqrt{n s}}}\right)^\alpha ~\le~ \left({1 \over {n^{1/2 - 1/(2k)}}}\right)^\alpha~.$$
For $\alpha = 20$, this probability is at most ${1 \over {n^{10 - 10/k}}} \le {1 \over {n^5}}$.
By union-bound over all stage indices $R \le n$, and all the $|E| \le n^2$ edges, we still have an only negligible probability that a congestion was ever experienced throughout the algorithm. (Here we say that a congestion is experienced if a vertex $v$ wishes to broadcast a message $m$ on a stage $R$ of the algorithm through an edge $(v,u)$ incident on $v$, and $v$ cannot do it for the entire $\alpha = 20$ rounds of this stage, because of other transmissions that employ the same edge.)

Hence, whp, in $O(B \cdot \alpha) + O(\sqrt{n s} \ln n) = \tO(B + n^{1/2 + 1/(2k)}\ln n)$ rounds, all broadcasts of the values of starting times will be completed. (Recall that $B$ is an upper bound on the depth of trees $T_{w_i}$.)
This completes Phase 0 of the algorithm.

Now the algorithm proceeds to Phase 1, on which convergecasts are conducted in all these trees. As a result of these convergecasts, every vertex $x \in T_{w_i}$ knows the size of its subtree in $T_{w_i}$.
These convergecasts are conducted by a similar procedure to the one that was described above, i.e., all leaves of $T_{w_i}$ start broadcasting at stage $\st(w_i)$, and each stage lasts for $\alpha = 20$ rounds. Hence these convergecasts are also completed in $O(B + \sqrt{n s} \cdot \ln n)$ rounds.
Then the ``parallel DFSs'' are conducted in all the trees in parallel by the same procedure of tree broadcast. As a result, all vertices $x$ in these trees $T_{w_i}$ learn their routing tables within $T_{w_i}$. They also learn their routing labels within additional $O(B \log n + \sqrt{n s} \log^2 n)$ time.
(Note that for labels one may need to send messages of size $O(\log n)$ words, and so stages of length $O(\alpha \cdot \log n) = O(\log n)$ are needed.)

Phase 2 is performed in the same way as was already described.   Specifically, the algorithm conducts convergecasts of messages $(\ell(x),w,e(x,u))$, where $u \in U(T)$ and $x$ is its parent in $T$, for some cluster tree $T$, over the BFS tree $\tau$ of the entire graph $G$. Since every selected vertex $u$ may participate in up to $s$ trees, and there are $O(\gamma)$ selected vertices, this convergecast requires $O(\gamma \cdot s + D)$ time. Analogously, the broadcast of the computed routing tables requires $O(\gamma \cdot s \log n + D)$ time.

Then each $u \in U(T)$ sends its routing table to all vertices of $T_u$. This is done using the tossed starting times and with stages of $\alpha$ rounds each, as in Phase 1. Hence this step requires $O(B \log n + \sqrt{n s} \log^2 n)$ time. Finally, the labels of selected nodes in $T'$ are broadcasted over the BFS tree $\tau$ within additional $O(\gamma \cdot s \cdot \log^2 n + D)$ time.

To summarize, the overall running time of the algorithm is $\tO(B   + D + \sqrt{n s} + \gamma \cdot s) = \tO({n \over \gamma }+ D + n^{1/2 + 1/(2k)}  + \gamma \cdot s)$.
By setting $\gamma = \sqrt{n/s} = {{n^{1/2 - 1/(2k)}} \over {\sqrt{\log n}}}$, we get the running time of
$\tO(\sqrt{n s} +D) = \tO(n^{1/2 + 1/(2k)}   +D)$.

\end{proof}

\bibliographystyle{alpha}
\bibliography{route}

\appendix

\section{Proof of \theoremref{thm:SPT}}\label{app:SPT}

Let $X\subseteq V$ be a set of vertices so that each $v\in V$ is sampled to $X$ independently with probability $1/\sqrt{n}$. Define $V'=A\cup X$, and note that with high probability $ B=4\sqrt{n}\ln n\ge |V'|$ (since it is given that $|A|\le 2\sqrt{n}\ln n$).
Apply the same preprocessing steps as in \sectionref{sec:pre} with $V'$ as defined here, to obtain a graph $G''$ on $V'$ satisfying \eqref{eq:g''}.

\paragraph{Computing Approximate SPT for $V'$.}
The first step is to compute the values $(\hat{d}(v),\hat{z}(v))$ for vertices $v\in V'$.
Every vertex in $v\in A$ initializes its values as $(0,v)$, while $v\notin A$ sets $(\infty,\bot)$.
Conduct $\beta=\min\{2^{\tilde{O}(\sqrt{\log n})},(\log n)^{O(k)}\}$ iterations of Bellman-Ford rooted at $A$: at every iteration, every vertex $v\in V'$ broadcasts its pair $(\hat{d}(v),\hat{z}(v))$ to the entire graph, and if $u\in V'$ has $w''(u,v)+\hat{d}(v)<\hat{d}(u)$, then $u$ updates its pair to be $(w''(u,v)+\hat{d}(v),\hat{z}(v))$. (Recall that $w''$ is the edge weight function of $G''$, where the latter is the virtual graph given by \theoremref{thm:Nanongkai} augmented with the hopset edges of \theoremref{thm:hopset}.)

The number of rounds required to construct $G''$ is $(n^{1/2+1/(2k)}+D)\cdot\min\{2^{\tilde{O}(\sqrt{\log n})},(\log n)^{O(k)}\}$, and by \lemmaref{lem:pipe} this term also bounds the number of rounds it takes to broadcast the $O(|V'|\cdot\beta)$ messages for the Bellman-Ford iterations.

\paragraph{Extending the SPT to $V$.}
At the end of the $\beta$ iterations of Bellman-Ford, every vertex $u\in V$ knows $(\hat{d}(v),\hat{z}(v))$ for every $v\in V'$. Every vertex $u\in V$ computes
\begin{equation}\label{eq:defd}
\hat{d}(u)=\min_{v\in V'}\{d_{uv}+\hat{d}(v)\}~,
\end{equation}
and sets $\hat{z}(u)=\hat{z}(v)$, where $v\in V'$ is the minimizer of \eqref{eq:defd}. (Recall that $d_{uv}$ is the value computed in \theoremref{thm:Nanongkai}.)

\paragraph{Analysis.}

We assume all the events of \claimref{claim:hit-path} hold (which happens with high probability).
For $u\in V$ let $z_u\in A$ be a vertex satisfying $d_G(u,z_u)=d_G(u,A)$.
Since we performed $\beta$ iterations of Bellman-Ford, using \eqref{eq:g''} with $v\in V'$ and $z_v\in A\subseteq V'$ we have that $v'$ satisfies \eqref{eq:stretc}.

Consider now some $u\in V$, and let $v\in V'$ be the minimizer in \eqref{eq:defd}. The left hand side of \eqref{eq:stretc} holds, as the fact that $v\in V'$ satisfies \eqref{eq:stretc} implies
\[
d_{uv}+\hat{d}(v)\stackrel{\eqref{eq:duv}}{\ge} d_G^{(B)}(u,v)+d_G(v,A)\ge d_G(u,v)+d_G(v,A)\ge d(u,A)~.
\]
For the right hand side of \eqref{eq:stretc}: In the case that $h(u,z_u)\le B$, by \eqref{eq:duv} we get that
\[
\hat{d}(u)\le d_{uz_u}+\hat{d}(z_u)\le (1+\epsilon)d_G^{(B)}(u,z_u)+0=(1+\epsilon)d_G(u,z_u)~.
\]
Otherwise $h(u,z_u)> B$, and by \claimref{claim:hit-path} there exists $v\in X\subseteq V'$ on the shortest path in $G$ from $u$ to $z_u$ with $h(u,v)\le B$. Since \eqref{eq:stretc} holds for $v$,
\begin{eqnarray*}
\hat{d}(u)&\le& d_{uv}+\hat{d}(v)\\
&\stackrel{\eqref{eq:duv}}{\le}& (1+\epsilon)d_G^{(B)}(u,v)+ (1+\epsilon)d_G(v,A)\\
&\le& (1+\epsilon)d_G(u,v)+ (1+\epsilon)d_G(v,z_u)\\
&=&(1+\epsilon)d_G(u,z_u)~.
\end{eqnarray*}

\end{document}